\newcommand{\GG}{{\mathcal G}}
\newcommand{\trunc}{\mathrm{trunc}}
\title{An efficient implementation and a strengthening of Alon-Tarsi list coloring method}
\author{Zdeněk Dvořák}{Computer Science Institute (CSI) of Charles University, Prague, Czech Republic}{rakdver@iuuk.mff.cuni.cz}{https://orcid.org/0000-0002-8308-9746}{Supported by project 22-17398S (Flows and cycles in graphs on surfaces) of Czech Science Foundation.}
\authorrunning{Z. Dvořák}
\keywords{list coloring, reducibility, polynomial method}
\begin{document}
\maketitle

\begin{abstract}
As one of the first applications of the polynomial method in combinatorics, Alon and Tarsi gave
a way to prove that a graph is choosable (colorable from any lists of prescribed size).  We describe an
efficient way to implement this approach, making it feasible to test choosability of graphs
with around 70 edges.  We also show that in case that Alon-Tarsi method fails to show that
the graph is choosable, further coefficients of the graph polynomial provide constraints
on the list assignments from which the graph cannot be colored.  This often enables us to
confirm colorability from a given list assignment, or to decide choosability by testing just a few list assignments.
The implementation can be found at \href{https://gitlab.mff.cuni.cz/dvorz9am/alon-tarsi-method}{https://gitlab.mff.cuni.cz/dvorz9am/alon-tarsi-method}.
\end{abstract}

\section{Introduction}

\emph{List coloring} (first developed by Vizing~\cite{vizing1976} and by Erd\H{o}s et al~\cite{erdosrubintaylor1979})
is a generalization of the ordinary proper graph coloring that naturally arises
in inductive proofs of colorability, but also has a rich and interesting theory of its own.
For a graph $G$, a \emph{list assignment} is a system $L=\{L_v:v\in V(G)\}$ of finite sets
assigned to vertices of $G$, specifying the colors that can be used to color each of the vertices.
An \emph{$L$-coloring} is a function $\varphi\in \bigtimes_{v\in V(G)} L_v$ such that $\varphi(u)\neq\varphi(v)$ for every edge $uv\in E(G)$.
In particular, if $L$ assigns to all vertices the same list of $k$ colors, then an $L$-coloring is exactly a proper coloring using $k$ colors.
We say that a graph $G$ is \emph{$k$-choosable} if $G$ has an $L$-coloring for every list assignment $L$ such that $|L_v|=k$ for
every $v\in V(G)$.  Clearly, a $k$-choosable graph is $k$-colorable, but the converse is false---for every $k$, there exists
a bipartite graph that is not $k$-choosable.  Moreover, while all planar graphs are 4-colorable~\cite{AppHak1,AppHakKoc}, there
exist non-4-choosable planar graphs~\cite{voigt1993} (but all planar graphs are 5-choosable~\cite{thomassen1994}).

As we mentioned, list coloring naturally appears in inductive proofs of colorability by the method
of reducible configurations.  In this setting, it is necessary to allow the list of each vertex to be of different size.
For a function $s:V(G)\to\mathbb{N}$, an \emph{$s$-list-assignment} is a list assignment $L$ such that $|L_v|\ge s(v)$ for
each vertex $v\in V(G)$, and we say that $G$ is \emph{$s$-choosable} if it is $L$-colorable for every $s$-list-assignment $L$.
The method of reducible configurations can be summarized as follows.  Suppose $\GG$ is a subgraph-closed class of graphs
and we want to show that all graphs in $\GG$ are $k$-colorable.  Suppose $C$ is an induced subgraph of a graph $G\in \GG$
and for $v\in C$, let $s'(v)=k-(\deg_G v - \deg_C v)$.  If $C$ is $s'$-choosable, then to show that $G$ is $k$-colorable,
it suffices to show that $G-V(C)$ is $k$-colorable.  Indeed, given a $k$-coloring $\varphi$ of $G-V(C)$, for each $v\in V(C)$ let $L_v$
be the set of colors that are not used on the neighborhood of $v$. Clearly $|L_v|\ge s(v)$ for each $v\in V(C)$, and thus
$C$ is $L$-colorable. Moreover, the choice of $L$ ensures that any $L$-coloring of $C$ combines with $\varphi$ to a proper $k$-coloring of the whole graph $G$.
Hence, if we show that every graph in $\GG$ contains an induced subgraph with this property, this shows that all graphs in $\GG$
are $k$-colorable, and actually, even $k$-choosable.  There is a huge number of papers using this approach or its variations; see~\cite{cranston2017introduction}
for a general introduction.

To execute this kind of arguments, one needs to be able to verify that (usually relatively small) graphs $C$ are $s$-choosable for specified
functions $s:V(C)\to\mathbb{N}$.
As long as one uses only a few of these \emph{reducible configurations}, one can show this ad-hoc.  However, in
more complicated proofs the number of reducible configurations may be in hundreds or thousands, and checking their choosability
by hand is not an option.  Hence, it is necessary to come up with a way to check choosability automatically.
Checking whether a given graph $C$ is $L$-colorable for a given list assignment $L$ is of course NP-hard, and no algorithm
with subexponential time complexity is known (or believed to exist); but this is actually not much of an issue:
For typical configurations with 10--20 vertices, standard exponential-time coloring algorithms (or even generic CSP or SAT solvers)
suffice to find an $L$-coloring in a reasonable time.  However, to verify $s$-choosability, one needs to go over all possible
$s$-list-assignments, and their number grows very fast\footnote{Let us remark that it may be feasible to go over all relevant $s$-list-assignments
when the configurations are very small and we only prove reducibility for $k$-coloring, i.e.,  we only need to investigate the list assignments consisting of subsets
of $\{1,\ldots,k\}$ and we can take advantage of the fact that the colors are interchangeable to reduce the number of possible choices.}.
For example, just the number of ways how to assign lists of size one to $n$ vertices is equal to the Bell number $B_n\ge \exp(\Omega(n\log n))$,
which exceeds $10^9$ for $n=15$.
Hence, rather than deciding choosability exactly, one usually employs conservative heuristics that either correctly decide that
a graph $C$ is $s$-choosable, or fail and do not provide any information about choosability of $C$.

A popular heuristic is based on the polynomial method and was developed by Alon and Tarsi~\cite{alon1992colorings}.
For a graph $G$, the \emph{graph polynomial} $P_G$ in variables $x_v$ for $v\in V(G)$
is defined by choosing an orientation $\vec{G}$ of $G$ arbitrarily and letting
$$P_G=\prod_{(u,v)\in E(\vec{G})} (x_v-x_u).$$
Note that the graph polynomial is defined uniquely up to the sign.  Throughout the paper,
by a statement such as ``the graph polynomial of $G$ is $p$'', we mean ``there exists
an orientation of $G$ such that the above definition results in $p$ (and for any other
orientation, it results either in $p$ or in $-p$)''.   For an index set $I$ and a function $f:I\to\mathbb{N}$,
let us define $x^f=\prod_{i\in I} x_i^{f(i)}$.
For a polynomial $q$ in variables $x_i$ for $i\in I$, let $[x^f]\,q$ denote the coefficient of the monomial $x^f$ in $q$.
The \emph{degree} $\deg_{x_i} q$ of a variable $x_i$ in $q$ is $\max\{f(i): f\in\mathbb{N}^I, [x^f]\,q\neq 0\}$.
For $\varphi:I\to\mathbb{R}$, let $q(\varphi)$ denote the evaluation of $q$ at $\varphi$, that is,
$$q(\varphi)=\sum_{f\in\mathbb{N}^I} [x^f]\,q \cdot \prod_{i\in I} \varphi(i)^{f(i)}.$$
The graph polynomial is connected to list coloring through the following simple observation.
\begin{observation}\label{obs-color}
Let $L=\{L_v\subset \mathbb{R}:v\in V(G)\}$ be an assignment of lists to vertices of $G$.
The graph $G$ is $L$-colorable if and only if there exists $\varphi\in \bigtimes_{v\in V(G)} L_v$
such that $P_G(\varphi)\neq 0$.
\end{observation}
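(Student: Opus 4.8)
The plan is to simply unwind the definitions. Fix an orientation $\vec{G}$ of $G$ realizing the graph polynomial, so that $P_G=\prod_{(u,v)\in E(\vec{G})}(x_v-x_u)$; since the statement only concerns whether $P_G(\varphi)$ is nonzero, the sign ambiguity in the definition of $P_G$ plays no role.

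The first step is to note that evaluation at a point is a ring homomorphism, so that for any function $\varphi$ assigning a real value $\varphi(v)$ to each vertex $v\in V(G)$ we have
$$P_G(\varphi)=\prod_{(u,v)\in E(\vec{G})}\bigl(\varphi(v)-\varphi(u)\bigr).$$
This is really the only place where one should check something: one has to reconcile the monomial-sum definition of $q(\varphi)$ given just before the statement with the result of substituting the values $\varphi(v)$ directly into the factored form of $P_G$. This is a routine consequence of distributivity, and it is where essentially all of the (minimal) work lies.

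The second step uses that $\mathbb{R}$ is an integral domain: the displayed product is nonzero if and only if every factor $\varphi(v)-\varphi(u)$ is nonzero, i.e.\ if and only if $\varphi(u)\neq\varphi(v)$ for every edge $uv\in E(G)$. The third step is the definitional observation that a function $\varphi$ with $\varphi(v)\in L_v$ for all $v$ is an $L$-coloring precisely when $\varphi(u)\neq\varphi(v)$ for every edge $uv$. Chaining these two equivalences shows that any such $\varphi$ is an $L$-coloring if and only if $P_G(\varphi)\neq 0$; quantifying over all choices of $\varphi$ in the product of the lists then yields the stated equivalence.

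I do not expect any genuine obstacle: the observation is essentially a reformulation of the definition of $P_G$ combined with the multiplicativity of evaluation. The only point requiring a moment's care is the first step, namely confirming that the formal-polynomial evaluation $q(\varphi)$ agrees with plugging the values into the product $\prod_{(u,v)}(x_v-x_u)$.
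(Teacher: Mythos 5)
Your argument is correct and is exactly the intended one: the paper states this as a ``simple observation'' without proof, and the implicit justification is precisely your unwinding---evaluation commutes with the product form of $P_G$, so $P_G(\varphi)=\prod_{(u,v)\in E(\vec{G})}(\varphi(v)-\varphi(u))$ is nonzero iff $\varphi(u)\neq\varphi(v)$ on every edge, i.e.\ iff $\varphi$ is an $L$-coloring. Nothing is missing.
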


Alon-Tarsi choosability method is based on the following algebraic statement known as the Combinatorial Nullstellensatz.
\begin{theorem}\label{thm-nullstellensatz}
Let $q\neq 0$ be a polynomial in variables $x_i$ for $i\in I$, and let $\{S_i\subset \mathbb{R}:i\in I\}$
be a system of finite sets.  If $\deg_{x_i} q < |S_i|$ for every $i\in I$, then there exists $s\in\bigtimes_{i\in I} S_i$
such that $q(s)\neq 0$.
\end{theorem}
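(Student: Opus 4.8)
The plan is to prove Theorem~\ref{thm-nullstellensatz} by induction on the number $n=|I|$ of variables; since a polynomial is a finite sum of monomials, we may assume $I$ is finite. The base case $n=0$ is immediate: $q$ is then a nonzero constant, and there is nothing to choose. The genuine arithmetic input to the whole argument is the elementary single-variable fact that a nonzero polynomial in one real variable of degree $d$ has at most $d$ roots; I would state this explicitly, because it is exactly the place where a hypothesis of the form $\deg_{x_i} q < |S_i|$ gets consumed.

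For the inductive step with $|I|=n\ge 1$, fix an index $j\in I$, put $I'=I\setminus\{j\}$, and write $q=\sum_{k=0}^{d} q_k\cdot x_j^k$, where $d=\deg_{x_j} q$ and each $q_k$ is the polynomial in the variables $\{x_i:i\in I'\}$ obtained by collecting those monomials of $q$ in which $x_j$ occurs to the exact power $k$. Since $q\neq 0$, some $q_k$ is nonzero; fix any index $m$ with $q_m\neq 0$. The one piece of bookkeeping that matters is the observation that if a monomial occurs in $q_k$, then that monomial times $x_j^k$ occurs in $q$; hence $\deg_{x_i} q_k \le \deg_{x_i} q < |S_i|$ for every $i\in I'$. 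In particular $q_m$ together with the system $\{S_i:i\in I'\}$ satisfies the hypotheses of the theorem, so the induction hypothesis yields $s'\in\bigtimes_{i\in I'} S_i$ with $q_m(s')\neq 0$.

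To conclude, substitute $s'$ for the $I'$-variables and regard what is left as the univariate polynomial $r(x_j)=\sum_{k=0}^{d} q_k(s')\,x_j^k$; a routine check that grouping a polynomial by powers of $x_j$ commutes with evaluation shows $r(t)=q(s',t)$ for every $t\in\mathbb{R}$. The coefficient of $x_j^m$ in $r$ is $q_m(s')\neq 0$, so $r$ is a nonzero polynomial of degree at most $d=\deg_{x_j} q<|S_j|$; by the single-variable fact it has fewer than $|S_j|$ roots, so there is $s_j\in S_j$ with $r(s_j)\neq 0$. Then $s=(s',s_j)\in\bigtimes_{i\in I} S_i$ satisfies $q(s)\neq 0$, which closes the induction.

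There is no real obstacle here: the only steps requiring attention are the degree-tracking when slicing $q$ along powers of $x_j$ (so that the induction hypothesis legitimately applies to $q_m$) and the compatibility of evaluation with that slicing. All of the substance is packed into the bound on the number of roots of a univariate polynomial, which is precisely what makes the strict inequalities $\deg_{x_i} q<|S_i|$ indispensable.
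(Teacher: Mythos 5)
Your proof is correct, and every step that needs justification (the degree bookkeeping when slicing $q$ along powers of $x_j$, the nonvanishing of the univariate specialization $r$, and the root-counting bound that consumes the hypothesis $\deg_{x_j} q < |S_j|$) is handled properly; note the paper itself states Theorem~\ref{thm-nullstellensatz} without proof, citing it as a known form of the Combinatorial Nullstellensatz, so there is no in-paper argument to compare against. Your induction on the number of variables is exactly the classical proof of this lemma from Alon--Tarsi, so you have in effect supplied the standard argument the paper implicitly relies on.
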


With a little extra work, these combine to the following claim.
\begin{theorem}\label{thm-alontarsi}
Let $G$ be a graph and let $s,f:V(G)\to \mathbb{N}^{V(G)}$ be functions such that
$f(v)<s(v)$ for every $v\in V(G)$.
If $[x^f]\,P_G\neq 0$, then $G$ is $s$-choosable.
\end{theorem}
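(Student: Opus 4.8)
The plan is to deduce the statement from the Combinatorial Nullstellensatz (Theorem~\ref{thm-nullstellensatz}) together with Observation~\ref{obs-color}, the ``little extra work'' being a variable-by-variable truncation of the graph polynomial. Fix an arbitrary $s$-list-assignment $L$; it suffices to show that $G$ is $L$-colorable. Since $f(v)<s(v)\le|L_v|$ and both $f(v)$ and $s(v)$ are natural numbers, we have $|L_v|\ge f(v)+1$ for every $v\in V(G)$, so we may choose for each $v$ a subset $S_v\subseteq L_v$ with $|S_v|=f(v)+1$. By Observation~\ref{obs-color} it is then enough to produce a $\varphi$ with $\varphi(v)\in S_v$ for every $v$ and $P_G(\varphi)\neq 0$, since any such $\varphi$ lies in the Cartesian product of the lists $L_v$.

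The naive idea of applying Theorem~\ref{thm-nullstellensatz} directly to $q=P_G$ with the sets $S_v$ fails, because in general we only know $\deg_{x_v}P_G\le\deg_G v$, which can exceed $|S_v|=f(v)+1$. To repair this I would replace $P_G$ by its reduction modulo the univariate polynomials $g_v(x_v):=\prod_{c\in S_v}(x_v-c)$. Writing $g_v(x_v)=x_v^{f(v)+1}-h_v(x_v)$ with $\deg h_v\le f(v)$, and repeatedly rewriting any occurrence of $x_v^{f(v)+1}$ inside $P_G$ as $h_v(x_v)$ (for every $v$, until no such power remains), one obtains a polynomial $\bar P$ with $\deg_{x_v}\bar P\le f(v)$ for all $v$ and with $\bar P-P_G$ lying in the ideal generated by $\{g_v:v\in V(G)\}$. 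Since each $g_v$ vanishes on $S_v$, this gives $\bar P(\varphi)=P_G(\varphi)$ whenever $\varphi(v)\in S_v$ for every $v$.

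The crux is to verify that this truncation does not destroy the coefficient of $x^f$, i.e.\ that $[x^f]\,\bar P=[x^f]\,P_G\neq 0$, so in particular $\bar P\neq 0$. Here I would use that $P_G$ is homogeneous of degree $|E(G)|$, being a product of $|E(G)|$ linear forms $x_v-x_u$; hence $[x^f]\,P_G\neq 0$ forces $\sum_{v}f(v)=|E(G)|$. Every rewriting step $x_v^{f(v)+1}\mapsto h_v(x_v)$ replaces a monomial by a sum of monomials of strictly smaller total degree, so any monomial of $\bar P$ of total degree exactly $|E(G)|$ must already appear in $P_G$ and must never have been rewritten. The monomial $x^f$ has total degree $|E(G)|$ and $x_v$-degree $f(v)<f(v)+1$ in every variable, so it is of exactly this type, and its coefficient is carried over unchanged.

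Finally I would apply Theorem~\ref{thm-nullstellensatz} to $q=\bar P$ with the sets $S_v$: we now have $\bar P\neq 0$ and $\deg_{x_v}\bar P\le f(v)<|S_v|$ for every $v$, so there is a $\varphi$ with $\varphi(v)\in S_v$ for all $v$ and $\bar P(\varphi)\neq 0$; then $P_G(\varphi)=\bar P(\varphi)\neq 0$, and Observation~\ref{obs-color} yields an $L$-coloring of $G$. As $L$ was an arbitrary $s$-list-assignment, $G$ is $s$-choosable. The only genuinely delicate point is the homogeneity-and-degree bookkeeping in the previous paragraph showing that truncation preserves $[x^f]\,P_G$; the remaining steps are a routine assembly of the two quoted results.
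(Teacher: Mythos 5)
Your proposal is correct and is essentially the intended argument: the ``little extra work'' combining Observation~\ref{obs-color} with Theorem~\ref{thm-nullstellensatz} is precisely the reduction of $P_G$ modulo the polynomials $\prod_{c\in S_v}(x_v-c)$, with homogeneity of $P_G$ (of degree $|E(G)|$) guaranteeing that the coefficient of $x^f$ survives the truncation, which is the same technique the paper itself carries out in detail in the proof of Theorem~\ref{thm-main}. No gaps; the bookkeeping step you flag as delicate is handled correctly.
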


Let us give a simple example of how Theorem~\ref{thm-alontarsi} can be used.

\begin{example}
For $n$ even, the graph polynomial of the $n$-cycle $C_n$ with vertices $0$, \ldots, $n-1$ is $2x_0\cdots x_{n-1}+\ldots$;
hence, setting $f(v)=1$ for every $v\in V(C_n)$, Theorem~\ref{thm-alontarsi} shows that $C_n$ is $2$-choosable.
\end{example}

Of course, to use Theorem~\ref{thm-alontarsi} to show choosability, we need to be able to compute
relevant coefficients of the graph polynomial. A direct computation would require one to go over $2^{|E(G)|}$ possibilities
and as such is not practical for graphs with more than about 30 edges.  As our first contribution, in Section~\ref{sec-impl},
we describe an algorithm to find the relevant coefficients that is more efficient in practice, evaluate
an implementation of this algorithm, and discuss the various choices leading to this algorithm as well as possible improvements.

Theorem~\ref{thm-alontarsi} is a conservative heuristic, in the sense that
if all relevant coefficients of $P_G$ are zero, it does not give us any information about whether $G$ is $s$-choosable or not.
As our second contribution, we employ further coefficients of the graph polynomial to obtain information
about list assignments from which $G$ cannot be colored.  For a list assignment $\{L_v:V\in V(G)\}$,
the \emph{characteristic vector} of a color $c$ is the function $\chi_{c,L}:V(G)\to\{0,1\}$ such that
$$\chi_{c,L}(v)=\begin{cases}
1&\text{if $c\in L_v$}\\
0&\text{if $c\not\in L_v$}
\end{cases}$$
for each $v\in V(G)$.  For $v\in V(G)$, let $1_v:V(G)\to\{0,1\}$ be defined by setting $1_v(v)=1$ and $1_v(u)=0$
for every $u\in V(G)\setminus\{u\}$.

\begin{theorem}\label{thm-main}
Let $G$ be a graph, let $L=\{L_v\subset \mathbb{R}:v\in V(G)\}$ be an assignment of lists to vertices of $G$,
and suppose that $f:V(G)\to \mathbb{Z}$ satisfies $f(v)<|L_v|$ for every $v\in V(G)$ and $\sum_{v\in V(G)} f(v)=|E(G)|-1$.
If $G$ is not $L$-colorable, then for every color $c$,
$$\sum_{z\in V(G)} [x^{f+1_z}]\,P_G\cdot \chi_{L,c}(z) = 0.$$
\end{theorem}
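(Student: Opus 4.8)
The plan is to combine Observation~\ref{obs-color} with the Combinatorial Nullstellensatz (Theorem~\ref{thm-nullstellensatz}), making essential use of the fact that $P_G$ is homogeneous of degree $m:=|E(G)|$ (it is a product of $m$ linear forms) and that $\sum_v f(v)=m-1$, so that $f$ sits exactly one below the degree of $P_G$. Since $P_G$ depends only on the differences $x_u-x_v$, translating all of $\mathbb R$ by a constant changes neither $P_G$, nor $L$-colorability, nor the truth of the conclusion (it replaces $L_v$ by $\{a+\delta:a\in L_v\}$ and $c$ by $c+\delta$); hence I may assume $c\neq 0$. Write $A=\{v\in V(G):c\in L_v\}$; if $A=\emptyset$ the asserted identity is an empty sum, so assume $A\neq\emptyset$, and assume $f\ge 0$ (otherwise all the coefficients $[x^{f+1_z}]\,P_G$ in play vanish).

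The key device is the auxiliary polynomial $Q:=P_G\cdot\prod_{v\in A}(x_v-c)$. Because $G$ is not $L$-colorable, Observation~\ref{obs-color} gives $P_G(\varphi)=0$ for every $\varphi\in\bigtimes_{v}L_v$, so $Q$ vanishes on $\bigtimes_{v}L_v$ as well. Put $1_A:=\sum_{v\in A}1_v$, so that $f+1_A$ has total weight $\sum_vf(v)+|A|=m-1+|A|=\deg Q-1$. Expanding $\prod_{v\in A}(x_v-c)=\sum_{T\subseteq A}(-c)^{|A\setminus T|}\prod_{v\in T}x_v$, a monomial $\prod_{v\in T}x_v$ contributes a copy of $x^{f+1_A}$ to $Q$ only together with a monomial of $P_G$ of total degree $(m-1)+|A\setminus T|$; since $P_G$ is homogeneous of degree $m$ this forces $|A\setminus T|=1$, i.e.\ $T=A\setminus\{z\}$ for a unique $z\in A$, in which case the required monomial of $P_G$ is $x^{f+1_z}$. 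Summing over $z$ yields
\[
[x^{f+1_A}]\,Q \;=\; -c\sum_{z\in A}[x^{f+1_z}]\,P_G,
\]
so, as $c\neq0$, it suffices to prove $[x^{f+1_A}]\,Q=0$.

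For this I would use that $Q$ vanishes on $\bigtimes_v L_v$. Reducing $Q$ modulo the univariate polynomials $g_v(x_v):=\prod_{a\in L_v}(x_v-a)$ produces a remainder of degree $<|L_v|$ in each $x_v$ that still vanishes on the grid, so by Theorem~\ref{thm-nullstellensatz} this remainder is the zero polynomial; thus $Q=\sum_v h_v g_v$ with $\deg(h_vg_v)\le\deg Q=m+|A|$. Writing $[x^{f+1_A}]\,Q=\sum_v[x^{f+1_A}]\,(h_vg_v)$ and comparing degrees (each $g_v$ is monic of degree $|L_v|$ in $x_v$, while $\deg h_v\le m+|A|-|L_v|$ and the $x_v$-exponent of $x^{f+1_A}$ is at most $|L_v|$), one checks that every summand vanishes, giving $[x^{f+1_A}]\,Q=0$.

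I expect the last step to be the main obstacle. The degree count there is immediate as long as $f(v)+1<|L_v|$ for all $v$, but it becomes delicate precisely when $f(v)+1=|L_v|$ for some $v\in A$: then the $x_v$-exponent of $x^{f+1_A}$ equals $|L_v|=\deg_{x_v}g_v$, so $x^{f+1_A}$ is not already reduced modulo $g_v$ and a single summand $[x^{f+1_A}]\,(h_vg_v)$ need not vanish. Making this rigorous seems to require either peeling off such vertices and treating them by hand, or the variant in which one reduces $P_G$ itself modulo monic divisors $\tilde g_v$ of degree exactly $f(v)+1$ whose roots lie in $L_v$ — chosen for $v\in A$ so that $c$ is one of the roots — and then exploits the homogeneity of $P_G$ together with the substitution $x_v\mapsto c$ $(v\in A)$, which kills $\tilde g_v$ for $v\in A$, to show that the remaining contributions cancel in the sum over $v$.
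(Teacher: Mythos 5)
Your reduction of the statement to a single coefficient identity is correct and elegant: after translating so that $c\neq 0$, setting $A=\{v:c\in L_v\}$ and $Q=P_G\cdot\prod_{v\in A}(x_v-c)$, homogeneity of $P_G$ (degree $m=|E(G)|$) indeed gives $[x^{f+1_A}]\,Q=-c\sum_{z\in A}[x^{f+1_z}]\,P_G$, so the theorem is equivalent to $[x^{f+1_A}]\,Q=0$. This per-color trick would even let you bypass the paper's device of taking the colors algebraically independent. However, the step you propose to establish $[x^{f+1_A}]\,Q=0$ is where the whole content of the theorem sits, and your argument for it does not work. Writing $Q=\sum_v h_vg_v$ with $g_v=\prod_{a\in L_v}(x_v-a)$ and $\deg(h_vg_v)\le\deg Q$, the monomial $x^{f+1_A}$ has total degree $\deg Q-1$, so besides the leading term $x_v^{\ell_v}$ of $g_v$ the next coefficient of $g_v$, namely $-\sum_{a\in L_v}a$, can pair with a top-degree monomial of $h_v$ and contribute. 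This happens already in your ``immediate'' regime $f(v)+1<|L_v|$ (take $v\in A$ with $f(v)=\ell_v-2$), not only in the tight case you flag; so individual summands $[x^{f+1_A}](h_vg_v)$ need not vanish. More fundamentally, no pure degree count can close this: a polynomial vanishing on the grid can perfectly well have nonzero coefficients on monomials of total degree $\deg Q-1$ with all exponents below the list sizes (e.g.\ $Q=g_v$ itself has coefficient $-\sum_{a\in L_v}a$ at $x_v^{\ell_v-1}$). The Combinatorial Nullstellensatz coefficient argument is special to top-degree monomials, and $x^{f+1_A}$ is one below top degree.

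To actually prove $[x^{f+1_A}]\,Q=0$ you must use more than vanishing on the grid: this is exactly what the paper's proof supplies. There one reduces $P_G$ modulo the grid polynomials (replacing $x_v^{\ell_v}$ by $p_v=x_v^{\ell_v}-g_v$), uses homogeneity together with the total-degree deficit of exactly one to see that the coefficient at $x^f$ of the reduced polynomial is $\sum_v[x^{f+1_v}]\,P_G\cdot\sum L_v$, invokes Theorem~\ref{thm-alontarsi} (with non-$L$-colorability) to kill the terms with $f(v)<\ell_v-1$, and then separates colors by algebraic independence. In your framework the analogous missing inputs are the homogeneity of $P_G$ \emph{and} the Alon--Tarsi vanishing of the coefficients $[x^{f'}]\,P_G$ for $f'\prec(\ell_v)_v$; your closing suggestion of reducing modulo divisors $\tilde g_v$ of degree $f(v)+1$ with $c$ among their roots is only a sketch and, as written, does not substitute for these. (A small additional point: when $f$ takes a negative value the sum does not vanish ``trivially''---the one surviving term $[x^{f+1_v}]\,P_G$ is killed only by Theorem~\ref{thm-alontarsi} plus non-$L$-colorability.) So the proposal is a genuinely different and promising reformulation, but its key step is a gap, not a routine verification.
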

Theorem~\ref{thm-main} is proved by a straightforward modification of the proof of Theorem~\ref{thm-alontarsi} by Alon and Tarsi~\cite{alon1992colorings},
and we postpone it till Section~\ref{sec-gener}, where we also discuss implementation details and testing results.
Here, let us give several examples of the conclusions that can be drawn from Theorem~\ref{thm-main}.

\begin{example}\label{ex-cycle}
The graph polynomial of the 5-cycle $C_5$ with vertices $0$, \ldots, $4$ in order is 
\begin{align*}
 &x_0x_1x_2x_3(x_0-x_1+x_2-x_3)\\
+&x_0x_1x_2x_4(-x_0+x_1-x_2+x_4)\\
+&x_0x_1x_3x_4(x_0-x_1+x_3-x_4)\\
+&x_0x_2x_3x_4(-x_0+x_2-x_3+x_4)\\
+&x_1x_2x_3x_4(x_1-x_2+x_3-x_4)\\
+&\ldots
\end{align*}
Applying Theorem~\ref{thm-main} for $f$ being $(1,1,1,1,0)$, $(1,1,1,0,1)$, $(1,1,0,1,1)$, $(1,0,1,1,1)$, and $(0,1,1,1,1)$,
we conclude that the characteristic vector $\chi$ of each color satisfies the following system
of linear equations:
\begin{align*}
\chi(0)-\chi(1)+\chi(2)-\chi(3)&=0\\
-\chi(0)+\chi(1)-\chi(2)+\chi(4)&=0\\
\chi(0)-\chi(1)+\chi(3)-\chi(4)&=0\\
-\chi(0)+\chi(2)-\chi(3)+\chi(4)&=0\\
\chi(1)-\chi(2)+\chi(3)-\chi(4)&=0
\end{align*}
This system has only two solutions in $\{0,1\}^{V(C_5)}$, namely $\chi=(0,0,0,0,0)$ or $\chi=(1,1,1,1,1)$.
That is, if $C_5$ is not $L$-colorable and $|L_v|=2$ for each $v\in V(C_5)$, then every color that is used
in $L$ must appear in all the lists, and thus $L_0=L_1=L_2=L_3=L_4$ (and, we can check that $G$
indeed is not $L$-colorable from such a list assignment).
\end{example}

\begin{example}\label{ex-fan}
Let $G$ consist of the path $1234$ and a vertex $0$ adjacent to all the vertices of this path,
and consider a list assignment $L$ such that $|L_0|=4$ and $|L_1|=|L_2|=|L_3|=|L_4|=2$.
The graph polynomial of $G$ is
\begin{align*}
&x_0^2x_1x_2x_3x_4(x_1-x_2+x_3-x_4)\\
+&x_0^3x_2x_3x_4(-x_0+x_2+x_4)\\
+&x_0^3x_1x_3x_4(x_0-x_1-x_4)\\
+&x_0^3x_1x_2x_4(-x_0+x_1+x_4)\\
+&x_0^3x_1x_2x_3(x_0-x_1-x_3)\\
+&\ldots
\end{align*}
Hence, Theorem~\ref{thm-main} gives the following system of linear equations for the characteristic vectors $\chi$.
\begin{align*}
\chi(1)-\chi(2)+\chi(3)-\chi(4)&=0\\
-\chi(0)+\chi(2)+\chi(4)&=0\\
\chi(0)-\chi(1)-\chi(4)&=0\\
-\chi(0)+\chi(1)+\chi(4)&=0\\
\chi(0)-\chi(1)-\chi(3)&=0.
\end{align*}
This implies $\chi(1)=\chi(2)$, $\chi(4)=\chi(3)$, and $\chi(0)=\chi(2)+\chi(3)$.  Since $\chi(0)\le 1$, this
means that $\chi(2)$ and $\chi(3)$ cannot both be one, and thus if $G$ is not $L$-colorable, then $L_2\cap L_3=\emptyset$.
Therefore, $G$ is $L$-colorable iff the graph $G'$ obtained from $G$ by deleting the edge $23$ is $L$-colorable.

The non-zero solutions to the system of equations in $\{0,1\}^5$ are $(1,1,1,0,0)$ and $(1,0,0,1,1)$. 
There is only one way how to combine these vectors to obtain the prescribed list sizes $(4,2,2,2,2)$,
implying that if $G$ is not $L$-colorable, then (up to permutation of colors) $L_1=L_2=\{a,b\}$, $L_3=L_4=\{c,d\}$,
and $L_0=\{a,b,c,d\}$.  We can easily check that indeed $G$ is not $L$-colorable.
\end{example}

\begin{example}\label{ex-wheel}
Let $G$ be the wheel with center $0$ and spokes $1$, \ldots, $5$, and let $s(0)=5$, $s(1)=s(2)=s(3)=2$ and $s(4)=s(5)=3$.
All monomials of $P_G$ with degrees smaller than those given by $s$ are zero, and thus Theorem~\ref{thm-alontarsi}
does not give any conclusion about $s$-choosability of $G$.
The only non-zero characteristic vectors satisfying the system of linear equations determined using Theorem~\ref{thm-main}
are $(1,1,1,1,0,0)$ and $(1,0,0,0,0,1,1)$.  This gives only one possibility for the $s$-list-assignment $L$ such that $G$ may
not be $L$-colorable (up to permutation of colors), namely
$L_1=L_2=L_3=\{a,b\}$, $L_4=L_5=\{c,d,e\}$, and $L_0=\{a,b,c,d,e\}$.  However, we can easily check that $G$ is $L$-colorable,
and thus $G$ is $s$-choosable.

Alternately, note that the characteristic vectors of the colors imply that any $s$-list-assignment $L$
such that $G$ is not $L$-colorable satisfies $L_3\cap L_4=\emptyset$ and $L_5\cap L_1=\emptyset$.
Consequently, $G$ is $L$-colorable iff the graph $G'$ obtained by deleting the edges $34$ and $51$ is $L$-colorable.
However, Theorem~\ref{thm-alontarsi} shows that $G'$ is $s$-choosable, and thus $G$ is $s$-choosable as well.
\end{example}

\begin{example}\label{ex-nolist}
Let $G$ consist of the wheel with center $0$ and spokes $1$, \ldots, $5$ and additional vertices $6$ adjacent to $1$ and $2$
and $7$ adjacent to $2$ and $3$, with a list assignment $L$ such that $|L_1|=|L_2|=|L_3|=3$, $|L_4|=|L_5|=|L_6|=|L_7|=2$
and $|L_0|=5$.  If $G$ is not $L$-colorable, then according to Theorem~\ref{thm-main}, the feasible characteristic vectors of colors are
$(0,0,0,0,0,0,0)$ and $(1,1,1,1,1,0,0)$.  There is clearly no way how to compose $L$ from colors with these characteristic vectors,
and thus $G$ is $L$-colorable.
\end{example}

Thus, there are a number of ways the outcome of Theorem~\ref{thm-main} can give interesting information about $L$-colorability
of a graph $G$, in the case that Theorem~\ref{thm-alontarsi} fails (by \emph{feasible characteristic vectors}, we mean
the solutions to the system of linear equations obtained using Theorem~\ref{thm-main} that belong to $\{0,1\}^{V(G)}$):
\begin{itemize}
\item It may happen that no linear combination of the feasible characteristic vectors with non-negative integral
coefficients is equal to the vector $s$ of the sizes of the lists, showing that $G$ is $s$-choosable (Example~\ref{ex-nolist}).
\item It may happen that the feasible characteristic vectors can be combined to give lists of the prescribed size only
in a small number of ways, and by going over the corresponding $s$-list-assignments, we can
\begin{itemize}
\item prove that $G$ is $s$-choosable (Example~\ref{ex-wheel}), or
\item identify all $s$-list-assignments $L$ such that $G$ is not $L$-colorable (Examples~\ref{ex-cycle} and~\ref{ex-fan}).
\end{itemize}
\item It may be possible to use the characteristic vectors to conclude that if $G$ is not $L$-colorable from an $s$-list-assignment $L$, then the lists assigned to adjacent
vertices $u$ and $v$ must be disjoint (Examples~\ref{ex-wheel} and \ref{ex-fan}).  Consequently, $G$ is $L$-colorable iff $G-uv$ is $L$-colorable.
This can be all that we need (e.g., this is sufficient to execute the argument outlined in the description of the method of reducible configurations above,
just instead of deleting the whole reducible configuration, we only delete this edge $uv$).
Alternately, we can iteratively process $G-e$ and conclude that it is $s$-choosable (Example~\ref{ex-fan}), and thus $G$ is $s$-choosable
as well.
\end{itemize}

Unlike Theorem~\ref{thm-alontarsi}, using Theorem~\ref{thm-main} we can conclude that a graph $G$ is $L$-colorable from
a specific $s$-list-assignment $L$, even if $G$ is not $s$-choosable.
It might be possible to use this to speed up backtracking algorithms to find an $L$-coloring:
Whenever we branch, we use the conditions from Theorem~\ref{thm-main} to check whether in one of the branches,
the rest of the graph is guaranteed to be colorable from the lists of colors still available for them.
If that is the case, we proceed just to this branch and cut off the rest.  It should be noted however that the overhead of
this extra test may outweigh the benefit of reducing the search tree.

Of course, there are (even quite simple) graphs for which Theorem~\ref{thm-main} fails to provide any information.
The method can be pushed further by considering more coefficients of the graph polynomial; we discuss this in more detail
in Section~\ref{sec-more}.

\section{Efficient implementation of Alon-Tarsi method}\label{sec-impl}

In theoretical applications of Theorem~\ref{thm-alontarsi}, one usually guesses which monomial(s) of the graph polynomial are
relevant, then shows that at least of them appears with a non-zero coefficient, possibly using the following standard combinatorial interpretation
of the coefficient shown already by by Alon and Tarsi~\cite{alon1992colorings}.
For a function $f:V(G)\to\mathbb{N}$, an \emph{$f$-orientation} of $G$ is an orientation of the edges of $G$ such that
each vertex $v\in V(G)$ has outdegree exactly $f(v)$.  Given a fixed reference orientation of $G$,
the \emph{sign} of another orientation $G$ that differs from the reference orientation in $m$ edges is $(-1)^m$.

\begin{lemma}\label{lemma-orient}
Let $G$ be a graph.  For any $f:V(G)\to\mathbb{N}$, the coefficient at $x^f$ in the graph polynomial of $G$
is the sum of the signs of the $f$-orientations of $G$.
\end{lemma}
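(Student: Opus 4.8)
The plan is to expand the defining product of $P_G$ and read off each resulting monomial together with its sign, identifying the contributing terms with orientations of $G$; the lemma then reduces to a short sign computation.

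First I would fix the reference orientation $\vec G$ (the one used both in the definition $P_G=\prod_{(u,v)\in E(\vec G)}(x_v-x_u)$ and in the definition of the sign of an orientation) and distribute the product over all edges. A term of the expansion is obtained by selecting, for each edge $e=(u,v)\in E(\vec G)$, one of the two summands of the factor $x_v-x_u$: either $+x_v$, the head variable, or $-x_u$, the tail variable. I would encode such a selection as an orientation $D$ of $G$ by orienting $e$ out of the selected endpoint; this is plainly a bijection between the $2^{|E(G)|}$ terms and the orientations of $G$. Under this bijection the monomial of the term corresponding to $D$ is $\prod_{v\in V(G)} x_v^{d^{+}_{D}(v)}$, where $d^{+}_{D}(v)$ is the outdegree of $v$ in $D$, so this monomial equals $x^f$ exactly when $D$ is an $f$-orientation.

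Next I would track the sign of the term corresponding to $D$, i.e. the product over all edges of the coefficient of the chosen summand. For $e=(u,v)\in E(\vec G)$ we pick up a factor $-1$ exactly when we select the tail $u$, i.e. when $D$ orients $e$ as $u\to v$, which is precisely when $D$ agrees with $\vec G$ on $e$. Hence the coefficient of the term is $(-1)^{|E(G)|-m}$, where $m$ is the number of edges on which $D$ differs from $\vec G$; up to the global factor $(-1)^{|E(G)|}$ this is the sign of $D$. Summing over all $f$-orientations $D$ (exactly the terms whose monomial is $x^f$) yields $[x^f]\,P_G=(-1)^{|E(G)|}\sum_D \mathrm{sign}(D)$.

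Finally I would dispose of the global sign. Both sides of the claimed identity depend on arbitrary choices---the orientation representing $P_G$ and the reference orientation defining signs---and the factor $(-1)^{|E(G)|}$ above is exactly what the sign ambiguity of ``the graph polynomial'' absorbs; choosing these two orientations consistently (equivalently, representing $P_G$ by $\prod_{(u,v)\in E(\vec G)}(x_u-x_v)$, or computing signs relative to the reverse of $\vec G$) makes the identity hold on the nose. There is no real obstacle here: the combinatorial correspondence between terms and orientations is immediate, and the only point requiring genuine care is this sign bookkeeping.
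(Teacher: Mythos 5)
Your proof is correct and is exactly the standard expansion argument of Alon and Tarsi that the paper cites for this lemma (the paper states it without proof). The bijection between terms of the expanded product and orientations of $G$, with the selected variable taken as the tail so that exponents become outdegrees, is right, and your sign bookkeeping --- including the observation that the global factor $(-1)^{|E(G)|}$ is absorbed by the up-to-sign convention for the graph polynomial, or equivalently by taking the reverse of $\vec G$ as the reference orientation for signs --- is precisely the care the statement requires.
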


Sum of signs of orientations superficially resembles the definition of a determinant, and thus one could
hope that it might be possible to compute it in polynomial time.  However, this does not seem to be the
case.  Note that for bipartite graphs, all $f$-orientations have the same sign (as they differ only by
flipping edges of an Eulerian subgraph).  Consider a 4-regular graph $G$, and let $G'$ be the graph
obtained from $G$ by subdividing each edge once.  Let $f$ assign $2$ to each vertex of $V(G)$ and $1$ to each
vertex of $V(G')\setminus V(G)$.  Then $|[x^f]\,P_{G'}|$ is the number of $f$-orientations of $G'$, which
is equal to the number of Eulerian orientations of $G$.  However, computing the number of Eulerian orientations
of a 4-regular graph is \#P-hard~\cite{huang2012dichotomy}, even for planar graphs~\cite{guo2013complexity}.
Hence, evaluating the coefficients of the graph polynomial is \#P-hard for bipartite planar graphs of maximum degree four.

To apply Theorem~\ref{thm-alontarsi}, one actually only needs to decide whether one of the relevant coefficients
is non-zero, not to evaluate them exactly, and we have not excluded the possibility that this could be done in
polynomial time.  However, this appears to be unlikely in general graphs\footnote{Though it should be mentioned
that in bipartite graphs, since all $f$-orientations have the same sign, this becomes just the question of whether
there exists an $f$-orientation for a function $f$ satisfying $f(v)<s(v)$ for each vertex $v$, which
can be answered in polynomial time through a standard reduction to the maximum matching in bipartite graphs.}.
Moreover, to apply Theorem~\ref{thm-main}, we actually need to determine the values of the coefficients.

Hence, let us turn out our attention to the problem of designing an exponential-time, yet practically useful algorithm
to compute the relevant coefficients of the graph polynomial.  We start by discussing some options, gradually arriving at
our chosen solution.

\subsection{Direct enumeration}\label{sec-direct}

The first approach that one might consider to determine a particular coefficient $[x^f]\,P_G$ is by a direct
enumeration of all $f$-orientations of $G$.  While this approach turns out not to be efficient enough, let us discuss
it in more depth before proceeding to a more useful algorithm.

It is possible to enumerate all $f$-orientations of a graph $G$ in polynomial time per each $f$-orientation.
To do so, it suffices to observe that it is possible to determine in polynomial time whether a partial
orientation extends to an $f$-orientation, by a straightforward reduction to maximum matching
in an auxiliary bipartite graph (obtained by subdividing the edges that have not yet been oriented, and
blowing up each original vertex $v$ into $f(v)$ minus the current outdegree of $v$ vertices).
With this subroutine, we can try orienting each of the edges of $G$ in both ways by backtracking,
immediately cutting off branches that do not lead to at least one $f$-orientation.

In practice, branches are cut off fairly rarely, and thus it is not worth the overhead of running the extendability test.
It turns out to be more efficient to just cut off trivially useless branches (where the
outdegree of a vertex $v$ exceeds $f(v)$), in combination with a suitable rule for selecting the edge
to orient (e.g. choosing an edge incident with a vertex $v$ whose current outdegree is closest to $f(v)$,
and in particular preferring the edges whose orientation is forced, as they are incident with a vertex $v$ with outdegree
equal to $f(v)$).

An issue with this approach is that in general the number of $f$-orientations to list grows quite fast with the number of edges.
Moreover, one may be forced to go over all possible choices of $f$ dominated by the list sizes.
\begin{example}\label{ex-all}
Consider the clique $K_n$ with lists of size $n-1$.  To verify that Theorem~\ref{thm-alontarsi} does not apply in this situation
(which it cannot, since $K_n$ is not $(n-1)$-colorable), we would have to compute all coefficients $[x^f]\,K_n$ for all functions $f$
such that $f(v)\le n-2$ for each $v\in V(K_n)$.  To do so, we would list all orientations of $K_n$ except for those that contain
a vertex of outdegree $n-1$, and the number of such orientations is
$$2^{|E(K_n)|}-n2^{|E(K_n)|-n+1}=2^{|E(K_n)|}\bigl(1-\frac{n}{2^{n-1}}\bigr).$$
Even with a very efficient implementation, this becomes impractically slow when the number of edges exceeds about $40$.
\end{example}
Example~\ref{ex-all} is basically the worst case for the direct enumeration algorithm.  Still, our experiments show
that even on less artificial inputs, the algorithm becomes quite slow for graphs of this size (taking several minutes
to process a graph with $40$ edges).  Let us mention a slight advantage of computing the coefficients one by one:
We can stop immediately once we find a non-zero one, improving the efficiency of the approach in case the answer is positive.

\subsection{Truncated multiplication}

A more efficient approach follows from an easy observation showing that we can process the edges one by one
if we compute all relevant coefficients at once.
For two functions $f,s:I\to \mathbb{N}$, we write $f\prec s$ if $f(i)<s(i)$ for each $i\in I$.
For a polynomial $q$ in variables $\{x_i:i\in I\}$ and a function $s:I\to \mathbb{N}$, let
$$\trunc_s(q)=\sum_{s'\prec s} [x^{s'}]\,q\cdot x^{s'},$$
i.e., the polynomial $\trunc_s(q)$ is obtained from $q$ by truncating it to monomials where the degree of each variable $x_i$
is less than $s(i)$.

\begin{observation}\label{obs-grad}
For any polynomials $q$ and $m$ in variables $\{x_i:i\in I\}$ and a function $s:I\to \mathbb{N}$,
$$\trunc_s(q\cdot m)=\trunc_s(\trunc_s(q)\cdot m).$$
\end{observation}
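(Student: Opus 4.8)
The plan is to show that every monomial $x^{s'}$ with $s'\prec s$ contributes the same coefficient on both sides of the claimed identity. Concretely, I would write $q$ as a sum of two parts, $q=\trunc_s(q)+r$, where $r=q-\trunc_s(q)$ collects precisely the monomials of $q$ in which at least one variable $x_i$ appears to degree at least $s(i)$. Then by bilinearity of multiplication and linearity of $\trunc_s$,
\[
\trunc_s(q\cdot m)=\trunc_s\bigl(\trunc_s(q)\cdot m\bigr)+\trunc_s(r\cdot m),
\]
so the identity reduces to the claim that $\trunc_s(r\cdot m)=0$.

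To prove $\trunc_s(r\cdot m)=0$, I would argue monomial by monomial. It suffices to treat the case where $r=x^{g}$ is a single monomial with $g\not\prec s$, i.e.\ $g(j)\ge s(j)$ for some $j\in I$, and $m=x^{h}$ is a single monomial; the general case follows by expanding $r$ and $m$ into their monomials and using linearity. The product $x^{g}\cdot x^{h}=x^{g+h}$ has, in the variable $x_j$, degree $g(j)+h(j)\ge g(j)\ge s(j)$, since $h(j)\ge 0$. Hence $g+h\not\prec s$, so by definition $\trunc_s(x^{g+h})=0$. Therefore each monomial of $r\cdot m$ is killed by $\trunc_s$, giving $\trunc_s(r\cdot m)=0$ as desired.

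There is really no serious obstacle here; the only point that needs a little care is the bookkeeping when passing from single monomials to general polynomials, namely making sure that cancellation among monomials of $q\cdot m$ does not cause trouble. It does not: $\trunc_s$ is a linear projection onto the span of the monomials $\{x^{s'}:s'\prec s\}$, and we have shown that $r\cdot m$ lies entirely in the complementary span (the span of monomials $x^{g}$ with $g\not\prec s$), which $\trunc_s$ annihilates. One should also note that the set of exponent vectors with $g\not\prec s$ is closed under adding an arbitrary vector in $\mathbb{N}^I$, which is exactly the coordinate-wise inequality $g(j)\ge s(j)\Rightarrow (g+h)(j)\ge s(j)$ used above; this is what makes the argument go through and is the conceptual heart of the statement.
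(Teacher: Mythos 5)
Your proof is correct: the decomposition $q=\trunc_s(q)+r$ and the observation that multiplying a monomial with some degree $g(j)\ge s(j)$ by any monomial keeps that degree at least $s(j)$ is exactly the argument the paper has in mind (the paper states this as an easy observation without proof). Nothing is missing; the remark about cancellation being harmless because $\trunc_s$ is a linear projection is a fine, if optional, bit of care.
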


In particular, Theorem~\ref{thm-alontarsi} states that if $\trunc_s(P_G)\neq 0$, then $G$ is $s$-choosable,
and Observation~\ref{obs-grad} shows that we can compute $\trunc_s(P_G)$ by multiplying the terms $(x_v-x_u)$ for $uv\in E(G)$
one by one and performing the truncation after each multiplication.  In the implementation, we process the edge
$uv$ when the current truncated polynomial is $q$ is as follows: For each monomial $a\cdot x^f$ of $q$ with $a\neq 0$,
\begin{itemize}
\item[(i)] if $f(v) + 1 < s(v)$, then add $a\cdot x^{f+1_v}$ to the output, and
\item[(ii)] if $f(u) + 1 < s(u)$, then add $-a\cdot x^{f+1_u}$ to the output.
\end{itemize}
Of course, we represent the polynomial $q$ as a table mapping each vector $f$ of the degrees to the coefficient $[x^f]\,q$,
so this amounts just to traversing the table representing $q$ and adding the values to the table representing the output.
Note that it may (and often does) happen that the coefficients cancel out---if $[x^{f-1_u}]\,q=[x^{f-1_v}]\,q$, then the
contributions to the coefficient at $x^f$ in the output sum to zero, and the monomial needs to be deleted from the output.

\begin{example}
As a quick qualitative comparison with the direct enumeration, let us consider the example of the clique $K_n$ with
list sizes $n-1$.  The number of functions $f:V(K_n)\to\mathbb{N}$ such that $f(v)\le n-2$ for each $v\in V(K_n)$
is at most $(n-1)^n=2^{n\log_2 n}$.  Assuming we have memory available on the order of gigabytes, we will be able
to store the associated coefficients as long as $n\log_2 n$ is at most about 30, i.e., for $n$ up to about $10$,
at which point $K_n$ has $45$ edges.
\end{example}
This is of course a crude overestimate, not taking into account the fact that we do not store all coefficients at once
(after $b$ edges were processed, the degree function $f$ of each monomial satisfies $\sum_{v\in V(G)} f(v)=b$),
and that some of the monomials turn out to have coefficient zero and can be dropped.  In our experiments,
a straightforward implementation of the truncated multiplication method starts to run out of memory for graphs with about 50--60 edges.
At this point the program runs for less than a minute, so time is not the limitation here.  We describe a way to decrease the memory
requirements in Section~\ref{ssec-seq}.  Before that, let us give a few remarks.

Similarly to the direct enumeration approach, in addition to deleting the monomials with zero
coefficients, we can also delete the monomials whose multiplication with the rest of the terms of the graph polynomial
cannot result in a monomial with degrees bounded by $s$; monomials with this property can be identified using the
same reduction to the maximum matching in bipartite graphs.  However, our testing shows that 
adding this overhead is not worthwhile except possibly in some very rare circumstances; more on this in Section~\ref{sec-testing}.

Let us also note that unlike the direct enumeration approach, we cannot terminate early, as the non-zero coefficients
are only computed all at once when all the edges have been processed.  Hence, the truncated multiplication
approach might be expected to be less efficient in ``easy to color graphs'' where the graph polynomial has
many applicable non-zero coefficients.  The improvement given in Section~\ref{ssec-seq}, whose primary purpose is to decrease the memory
requirements, also mitigates this issue.

Before we describe this improvement, let us describe an important detail in the efficient implementation of the truncated
multiplication algorithm.

\subsection{Representation of the polynomial}\label{ssec-repr}

To execute the truncated multiplication algorithm, we need to be able to add to coefficients of specific monomials
in the output polynomial.  Hence, it seems natural to represent the polynomial as an associative array mapping
the vector of the degrees of a monomial to the coefficient.  However, each of the standard ways of implementing
associative arrays (hash tables, search trees, tries, \ldots) comes with substantial time and space overheads.
Fortunately, a much simpler and more efficient alternative exists.

Let $<$ be the lexicographic ordering on functions from $V(G)$ to $\mathbb{N}$; i.e., we fix the order $v_1$, \ldots, $v_n$
of vertices of $G$ arbitrarily and we write $f_1<f_2$ if there exists $a\in\{1,\ldots,n\}$ such that $f_1(v_i)=f_2(v_i)$
for $i\in\{1,\ldots,a-1\}$ and $f_1(v_a)<f_2(v_a)$.  We store a polynomial $q$ as the list of all pairs $(f,c)$
such that $[x^f]\,q = c\neq 0$, sorted in the increasing lexicographic ordering according to $f$.
The key observation is as follows: If $f_1<f_2$, then for any vertex $v\in V(G)$, $f_1+1_v<f_2+1_v$.
That is, if we process the monomials in the input polynomial in the increasing lexicographic order of their degree vectors,
then the monomials added to the output polynomial in the part (i) of the truncated multiplication algorithm
are also produced in the increasing lexicographic order, and so are the monomials produced in part (ii).
Hence, we can merge the coefficients produced in (i) and (ii) as in mergesort.

That is, with the representation by a sorted list, the truncated multiplication algorithm can be implemented
as follows:  Let $(f_u,c_u)$ and $(f_v,c_v)$ be the earliest elements in the input list such that
$f_u(u)+1 < s(u)$ and $f_v(v)+1 < s(v)$.  Perform the following operation until both of these elements reach
the end of the input list:
\begin{itemize}
\item If $f_u+1_u=f_v+1_v$, then:
\begin{itemize}
\item If $c_u\neq c_v$, add $(f_v+1_v,c_v-c_u)$ to the end of the output.
\item Advance $(f_u,c_u)$ and $(f_v,c_v)$ to the next input elements such that $f_u(u)+1 < s(u)$ and $f_v(v)+1 < s(v)$.
\end{itemize}
\item Otherwise, if $f_u+1_u>f_v+1_v$, then add $(f_v+1_v,c_v)$ to the end of the output
and advance $(f_v,c_v)$ to the next input element such that $f_v(v)+1 < s(v)$.
\item Otherwise, add $(f_u+1_u,-c_u)$ to the end of the output and advance $(f_u,c_u)$ to the next input element such that $f_u(u)+1 < s(u)$.
\end{itemize}

As the lists only need to be accessed sequentially and we only need to add to the end,
we store them simply as arrays; in addition to simplicity, this results in a cache-friendly
memory access pattern.  To minimize the memory requirements, we store the degree vectors in a packed way,
taking only $\lceil \log_2 \max \{s(v):v\in V(G)\}\rceil\cdot |V(G)|$ bits\footnote{In reality, slightly
more, as we add padding to ensure that the bits representing the degree of each vertex are stored
in the same memory word, rather than being potentially split across two consecutive words in case that
$\lceil \log_2 \max \{s(v):v\in V(G)\}\rceil$ does not divide the width of the word.  This simplifies
operations such as the addition of $1_v$ for some vertex $v\in V(G)$ to the degree vector.}.
In addition to taking less memory, this minimizes the time complexity of copying the (modified)
degree vectors from the input to the output.

For illustration, the change to this mergesort-like approach from our initial implementation using
a hash table (\texttt{unordered\_map} from the standard C++ library) improved the performance by the factor of about $10$.

\subsection{Sequentialization (and parallelization)}\label{ssec-seq}

The main limitation of the truncated multiplication algorithm is the memory consumption---already for graphs with
around 50 edges, it commonly consumes gigabytes of memory.  The key problem is that we compute all the
coefficients of the truncated graph polynomial (and the intermediate polynomials) at once.
To avoid this issue and to divide the work into smaller chunks, we use the following observation.
We say that two polynomials $p$ and $q$ are \emph{disjoint} if no monomial appears with a non-zero
coefficient in both of them.  For a set $S$ of vertices of a graph $G$, we say that two vectors $f,f':V(G)\to \mathbb{N}$
are \emph{$S$-equivalent} if $f(v)=f'(v)$ for every $v\in S$.  We say that a polynomial $q$ in
in variables $\{x_v:v\in V(G)\}$ is \emph{$S$-homogeneous} if any two vectors $f$ and $f'$ such that $[x^f]\,q\neq 0$
and $[x^{f'}]\,q\neq 0$ are $S$-equivalent.  The \emph{$S$-partition} of a polynomial $p$ in variables $\{x_v:v\in V(G)\}$,
is the smallest system $p_1$, \ldots, $p_m$ of pairwise disjoint $S$-homogeneous polynomials such that $p=p_1+\ldots+p_m$.
\begin{observation}
Let $H$ be a spanning subgraph of a graph $G$, let $S\subseteq V(G)$ consist of vertices such that
all incident edges of $G$ belong to $E(H)$, and let $H'$ be the spanning subgraph of $G$ with edge set $E(G)\setminus E(H)$.
Let $s:V(G)\to\mathbb{N}$ be an arbitrary function
and let $p_1$, \ldots, $p_m$ be the $S$-partition of $\trunc_s(P_H)$.  For $i=1,\ldots, m$, let
$q_i=\trunc_s(p_i\cdot P_{H'})$.  Then $q_1$, \ldots, $q_m$ is the $S$-partition of $\trunc_s(P_G)$.
\end{observation}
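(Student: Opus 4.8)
The plan is to reduce the statement to Observation~\ref{obs-grad} together with the fact that the graph polynomial is multiplicative under partitioning the edge set, and then to verify that the resulting summands $q_i$ have exactly the combinatorial shape of an $S$-partition.

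First I would note that $P_G=P_H\cdot P_{H'}$ (up to the usual sign): since $E(G)$ is the disjoint union of $E(H)$ and $E(H')$, one may pick the reference orientation of $G$ so that it restricts to the chosen orientations of $H$ and of $H'$, and then the defining product for $P_G$ factors as the product of the defining products for $P_H$ and $P_{H'}$. Applying Observation~\ref{obs-grad} with $q=P_H$ and $m=P_{H'}$, and using that $\trunc_s$ is a coefficient-wise (hence linear) projection and that $\trunc_s(P_H)=p_1+\dots+p_m$, I get
\[
\trunc_s(P_G)=\trunc_s(P_H\cdot P_{H'})=\trunc_s\bigl(\trunc_s(P_H)\cdot P_{H'}\bigr)=\trunc_s\Bigl(\sum_{i=1}^m p_i\cdot P_{H'}\Bigr)=\sum_{i=1}^m \trunc_s(p_i\cdot P_{H'})=\sum_{i=1}^m q_i .
\]
So $\trunc_s(P_G)=q_1+\dots+q_m$, and it remains to show this decomposition is the $S$-partition.

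The key structural observation I would use is that the variables $x_v$ with $v\in S$ do not occur in $P_{H'}$: by hypothesis every edge of $G$ incident with a vertex of $S$ lies in $E(H)$, so $H'$ has no edge meeting $S$, whence $[x^g]\,P_{H'}\neq 0$ forces $g(v)=0$ for all $v\in S$. Since $p_i$ is $S$-homogeneous, all of its monomials restrict to one common function $\sigma_i$ on $S$; combined with the previous sentence, every monomial $x^f$ appearing in $p_i\cdot P_{H'}$ satisfies $f|_S=\sigma_i$, and hence so does every monomial of $q_i=\trunc_s(p_i\cdot P_{H'})$, because truncation only deletes monomials. Thus each $q_i$ is $S$-homogeneous. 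Moreover, the $p_i$ have pairwise distinct restrictions to $S$ (if two of them agreed on $S$ they could be merged into a single $S$-homogeneous summand, contradicting minimality of the $S$-partition of $\trunc_s(P_H)$), so the $q_i$ do as well and are therefore pairwise disjoint. Finally, for any polynomial, grouping its monomials according to their restriction to $S$ yields its unique $S$-partition; since the nonzero $q_i$ are pairwise disjoint, $S$-homogeneous, sum to $\trunc_s(P_G)$, and carry distinct restrictions to $S$, they are precisely this grouping, i.e.\ the $S$-partition of $\trunc_s(P_G)$.

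I do not expect a genuine obstacle: the only algebraic input is the single invocation of Observation~\ref{obs-grad}, and everything else is bookkeeping. The one place to be careful is the minimality/uniqueness argument for the $S$-partition in the last two steps, together with the harmless degenerate case in which some $q_i$ happens to be the zero polynomial and is simply discarded (the $S$-partition consisting by definition of nonzero summands).
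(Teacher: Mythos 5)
Your argument is correct and is exactly the intended justification, which the paper leaves implicit by stating this as an unproved observation: factor $P_G=P_H\cdot P_{H'}$ over the edge partition, apply Observation~\ref{obs-grad} with linearity of $\trunc_s$, and use that no $x_v$ with $v\in S$ occurs in $P_{H'}$ so each $q_i$ inherits the common $S$-restriction of $p_i$, giving disjointness and minimality. Your handling of the degenerate case (discarding any $q_i$ that vanishes after truncation) is the right reading of the statement.
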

In particular, the results $\trunc_s(p_1\cdot P_{H'})$, \ldots, $\trunc_s(p_m\cdot P_{H'})$
are pairwise disjoint, and thus these truncated multiplications can be performed completely independently.

This suggests the following algorithm.  Fix an ordering $v_1$, $v_2$, \ldots, $v_n$ of the vertices of $G$.
Start with the polynomial $p=1$, and for $i=1$, $2$, \ldots in order, apply the truncated multiplication algorithm for the not-yet-processed
edges incident with $v_i$.  After each vertex $v_i$ is processed, if the number of monomials of $p$ with
non-zero coefficients exceeds some bound $N$, stop this process,
\begin{itemize}
\item divide $p$ into its $\{v_1,\ldots,v_i\}$-partition $p_1$, \ldots, $p_m$, and
\item for $j=1,\ldots, m$, run the same procedure recursively starting from $p_j$ and processing the
edges of the subgraph induced by $\{v_{i+1}, \ldots, v_n\}$.
\end{itemize}
Let us remark that in the representation of $p$ described in the previous section,
the $\{v_1,\ldots,v_i\}$-equivalent monomials appear consecutively in the list, and thus we
can easily break up $p$ into its $\{v_1,\ldots,v_i\}$-partition.
For the standard Alon-Tarsi method, if we end up with a non-zero polynomial in any of the branches, we can stop
the whole process immediately, recovering the one advantage of the direct enumeration algorithm.
For the extended Alon-Tarsi method (Theorem~\ref{thm-main}), we need to collect the information from all the branches;
we discuss this in more detail in Section~\ref{sec-gener}.

The effect of the choice of the bound $N$ can be seen from the results of the following evaluation
on a non-choosable graph with 25 vertices and 60 edges
(six copies of $K_5$ glued in a path-like fashion over distinct vertices, with the list
sizes equal to the vertex degrees).

\begin{center}
\begin{tabular}{|l|c|c|c|c|c|}
\hline
$N$&$10^3$&$10^4$&$10^5$&$10^6$&$10^7$\\
\hline
time&30 s&25 s&23 s&24.5 s&25.5 s\\
\hline
\end{tabular}
\end{center}

The optimal value of $N$ at the testing machine seems to be around $10^5$: With smaller $N$, we need to pay the overhead
of the branching routine more often, and we do not take full advantage of the efficient sequential
mergesort-like processing.  With larger values of $N$, we no longer fit in the L2 cache.

Let us remark that since the computations in the branches are independent, they can be performed in
parallel or even in a distributed fashion; we did not implement these improvements (one reason
is that we are mostly interested in the extended Alon-Tarsi method, and in that setting the task
of combining the information obtained from different branches in non-sequential fashion seems
somewhat non-trivial).

\subsection{Choice of the ordering}\label{ssec-order}

In what order should the edges of the graph processed?  It is clearly beneficial if the truncation
eliminates as many coefficients as early as possible, and thus we should start from edges in
hardest to orient subgraphs, that is, in subgraphs $H$ with $\Bigl(\sum_{v\in V(H)} s(v)\Bigr) - |E(H)|$
minimum.  So, ideally we would like to fix an ordering $v_1$, \ldots, $v_m$ of vertices
chosen so that such hard-to-orient subgraphs appear early in the ordering,
and then for $a=2, \ldots, m$ process edges from $v_a$ to $v_1$, \ldots, $v_{a-1}$ in order.

Unfortunately, this is incompatible with the desire to process all edges incident with a few of the vertices,
as needed for the serialization described in the previous section.  As an extreme example,
for the complete graph $K_n$, if we used the order of edges described in the previous paragraph,
we would first finish processing all edges incident with a vertex when considering
the edge $v_nv_1$, after $\binom{n-1}{2}$ other edges have been processed.

We restrict ourselves to the type of edge orderings described in the previous section,
i.e., we fix the ordering $v_1$, \ldots, $v_n$ of the vertices and process all edges incident
with $v_1$, then all the remaining edges incident with $v_2$, etc.  In choosing the vertex
ordering, we now aim to balance two somewhat contradictory goals: We would like
the subgraph induced by each initial segment $v_1$, \ldots, $v_a$ to be as hard to orient
as possible, but also keep the number of not yet processed vertices with neighbors in this segment small, to limit
the number of choices for the degrees of the variables corresponding to $v_{a+1}$, \ldots, $v_n$.

The latter objective suggests to proceed according to an ordering with the smallest vertex separation number\footnote{The
\emph{vertex separation number} of an ordering of vertices of a graph is the minimum integer $k$ such that for every vertex $v$, at most $k$ vertices
appearing after $v$ in the ordering are adjacent to $v$ or vertices preceding $v$ in the ordering.
The vertex separation number of a graph is the minimum of vertex separation numbers of the orderings of its vertices.}.
However, the vertex separation number of a graph is equal to its pathwidth, and thus it is NP-hard to
determine exactly, or approximate up to a constant additive term~\cite{BGHKTree}; and while pathwidth
is fixed-parameter tractable~\cite{pathwidthfpt}, the corresponding algorithm is not useful in practice.
Moreover, following this ordering may conflict with the desire to keep the processed part of the graph
as dense (and thus hard to orient) as possible.

\begin{table}
\begin{center}
\begin{tabular}{|l|c|c|c|c|c|c|c|}
\hline
Data set&VSEP&MD&MD+PROC&OVER&LIST&LIST+DEG&MDR\\
\hline
crit12&59\%&61\%&59\%&64\%&94\%&85\%&88\%\\
planar&55\%&61\%&51\%&49\%&45\%&45\%&50\%\\
\hline
\end{tabular}
\end{center}
\caption{Comparison of ordering heuristics.}\label{tab-ordering}
\end{table}

Hence, we resorted to choosing an ordering heuristically; we compare several of heuristics that we tested in Table~\ref{tab-ordering}.
The first six heuristics greedily select $v_1$, $v_2$, \ldots
in order, always choosing the vertex $v_i$ according to one of the following rules.
\begin{itemize}
\item VSEP: So that after processing $v_i$, the number of vertices with at least one processed
neighbor is minimized.  Among those vertices for which this number is smallest, the one of minimum degree is chosen.
\item MD: Minimum degree in $G-\{v_1,\dots,v_{i-1}\}$.
\item MD+PROC: As MD, but secondarily among the vertices of the minimum degree, the
one with most neighbors in $\{v_1,\ldots,v_{i-1}\}$.
\item OVER: Minimizing the number of edges leaving the initial segment, i.e., with minimum
difference of the degree in $G-\{v_1,\dots,v_{i-1}\}$ and the number of neighbors in $\{v_1,\ldots,v_{i-1}\}$.
\item LIST: Smallest list size, and secondarily smallest degree in $G-\{v_1,\dots,v_{i-1}\}$.
\item LIST+DEG: Smallest sum of the list size and the degree in $G-\{v_1,\dots,v_{i-1}\}$.
\end{itemize}
Finally, MDR takes the reverse of the ordering obtained by MD.

For the data sets, crit12 is a set of 844030 graphs with list sizes with at most 23 edges each,
with no non-zero coefficients relevant for Theorem~\ref{thm-alontarsi} (obtained as part of a project to generate
obstructions to 5-choosability of graphs drawn on the torus); planar is a choosable planar graph
with 16 vertices and 37 edges (planar $3\times 4$ grid with diagonals and with four additional vertices adjacent to the first row,
last row, first column, and last column, the vertices incident with the outer face with list size three
and all others with list size five).  For the second graph, we disable the serialization to eliminate the
(semi-random) effect of stopping early when the first non-zero coefficient is found in one of the branches.
In the table, we list the number of monomials with non-zero coefficients obtained during the whole computation,
relatively to the results obtained when we simply retain the ordering of the vertices as in the input.

On several other test graphs the choice of heuristic did not play any role (note for example
that all the heuristics except for VSEP and MDR result in the same ordering on regular graphs where
all vertices have the same list size).  Based on these experiments, we chose to use MD+PROC heuristic
to determine the order of vertices to process---in addition to behaving quite well on both data sets
described above, it has the advantage of improving the granularity of the sequentialization.
That is, note that in the algorithm, we can only check whether the bound $N$ on the number of coefficients
is exceeded after we have processed all edges leaving a vertex.  If the input graph is $d$-degenerate,
then MD+PROC (as well as MD) heuristics guarantee that at most $d$ edges are processed between
the consecutive checks.

Let us remark that there exist graphs for which the VSEP heuristics significantly outperforms MD+PROC;
see the end of the following subsection for an example.  Hence, in specific applications, it may be worth
experimenting with the choice of the ordering.

There seems to be a lot of room for improvement in the choice of the ordering.  For example, one might be able
to come up with a sufficiently efficient way to compute or estimate for a given subgraph $H$ the number of different
vectors $f\prec s$ such that $H$ has an $f$-orientation.  It would then be natural to choose the $i$-th vertex $v_i$
of the ordering so that this quantity is minimized among the subgraphs consisting of edges incident with $v_1$, \ldots, $v_i$.
It might also be useful not to just select the vertices one by one, but take the effects of processing several vertices
into account when choosing $v_i$.  Finally, let us note that we could select the ordering adaptively, possibly choosing
a different ordering in each of the branches of the serialization.

\subsection{Testing results}\label{sec-testing}

We have tested the performance of the described algorithm on a number of test cases.  The measurements
were performed on a machine with Intel Core i5-7200U 2.50GHz CPU with 8GB of memory.

\begin{itemize}
\item The crit12 set of 844030 graphs described in the previous section, with no non-zero coefficients relevant for Theorem~\ref{thm-alontarsi},
was processed in 4 seconds.
\item The non-choosable graphs obtained from $a$ copies of $K_b$ glued in a path-like fashion over distinct vertices, with the list
sizes equal to the vertex degrees.  We performed the tests for $b\in \{3,4,5,6\}$ and various values of $a$; the dependence
of the runtime on the number of edges (rounded to the nearest multiple of 5) for given $b$ is shown in Table~\ref{tab-akb}.
This gives an indication of how the performance of the method degrades for denser graphs.

\begin{table}
\begin{center}
\begin{tabular}{|c|cccccc|}
\hline
\diaghead(-2,1){aaaaaaaa}{$b$}{edges}&45&50&55&60&65&70\\
\hline
3&$<$0.01 s&0.01 s&0.08 s&0.12 s&0.3 s&1 s\\
4&&0.15 s&0.7 s&4 s&26 s&186 s\\
5&&1 s&&23 s&&703 s\\
6&0.5 s&&&50 s&&\\
\hline
\end{tabular}
\end{center}
\caption{Runtime of the standard Alon-Tarsi on graphs $aK_b$.}\label{tab-akb}
\end{table}

\item We also tested several classes of choosable graphs.  Due to early termination when a suitable coefficient is
found, the timing in this case can be expected to be more affected by the luck (whether the choice of the ordering
of the vertices and corresponding branching in the sequentialization quickly leads to a non-zero coefficient).
\begin{itemize}
\item The graph obtained from $a$ copies of $K_b$ ($b\ge 4$) glued in a path-like fashion over distinct vertices and with one edge removed
from one of the copies, with the list sizes equal to the vertex degrees.  Here we list some of the timing results that we find indicative
of the performance (in all the cases, $a$ is chosen largest such that the graph has at most $2^7$ vertices):

\begin{center}
\begin{tabular}{|cccc|}
\hline
$b$&$a$&edges&time\\
\hline
4&42&251&0.2 s\\
5&31&309&0.4 s\\
6&25&374&0.5 s\\
\hline
\end{tabular}
\end{center}

\item Planar $a\times a$ grid with diagonals and with four additional vertices adjacent to the first row,
last row, first column, and last column, the vertices incident with the outer face with list size three
and all others with list size five.  It is known that the choosability of these graphs can be shown using Theorem~\ref{thm-alontarsi},
as proved by Zhu~\cite{atar}.  For $a=11$ (364 edges), the program runs for about 0.6 s.
\item Cycle with vertices $0$, $1$, \ldots, $3n-1$ together with all edges of form $\{i,(i+n)\bmod 3n\}$, and with all
list sizes equal to 3.  Note that this graph is an edge-disjoint union of a cycle of length $3n$ and of $n$ triangles;
and thus it is known to be $3$-choosable using Theorem~\ref{thm-alontarsi} by the well-known
cycle plus triangles theorem~\cite{cytri}.  In this case, already for $n=15$ (90 edges), the program takes around 9 seconds.
More importantly, at this point the program requires around 7GB of memory.
\end{itemize}
\end{itemize}

The worse performance in the last case is worth some discussion, in particular as the memory consumption indicates
a failure of the serialization improvement.  Let us note that there is only a unique monomial useful for Theorem~\ref{thm-alontarsi},
$x^f$ for $f$ assigning the value $2$ to all vertices.  In particular, when performing sequentialization, only one of the branches
contains relevant monomials.  Somewhat luckily, the ordering of the monomials chosen in the program
is such that when we perform serialization, we first consider the monomials whose restrictions to the already processed
vertices have lexicographically largest degrees, i.e., those where all variables corresponding to the processed vertices have degree two.
Consequently, we follow this unique relevant branch first and avoid any backtracking.
However, the issue is that even with this restriction, the number of monomials with non-zero coefficients is too large---the number of
monomials that the program tracks peaks after 15 vertices are processed at more than $47$ million\footnote{We tested the possibility to remove the monomials that cannot contribute to the coefficient of $x^f$, detected by the reduction to
the maximum matching in bipartite graphs described at the beginning of Section~\ref{sec-direct}.  This reduces the peak number
of monomials to less than $26$ million, roughly halving the memory consumption.  However, the extra overhead associated
with this pruning increases the running time to 36 seconds, i.e., by factor of four.  Thus, even in what arguably could be seen
as very favorable circumstances for the pruning, it does not seem to be worthwhile.  Let us however remark that
using the pruning eliminates the element of chance of the right branch happening to be taken first in the serialization.}.

A reason for this fast growth is that the chosen ordering of vertices ends up having very large vertex separation number.
Consequently, for this particular graph, the VSEP ordering heuristic siginificanly outperforms others, as it leads
us to process the graph in the ``triangle by triangle`` ordering with vertex separation number six.  In the VSEP
ordering, the running time is less than 0.01 s even for $n=40$ (240 edges).  In the case that the triangles are added to the
cycle at random, VSEP still outperforms MD+PROC, but to a lesser degree: MD+PROC starts to run out of memory at $n=26$
(at which point VSEP ordering leads to around 9 times fewer monomials to process), while VSEP starts to run out of memory at $n=34$.

\section{Extended Alon-Tarsi method}\label{sec-gener}

Let us start by proving the theoretical result underlying our strengthening of Alon-Tarsi method.

\begin{proof}[Proof of Theorem~\ref{thm-main}]
Since both $L$-colorability and the constraint from the statement of the theorem only depend
on the equality between colors, without loss of generality, we can assume that the elements of $\bigcup_{v\in V(G)} L_v$
are algebraically independent.  For each $v\in V(G)$, let $\ell_v=|L_v|$ and
\begin{equation}\label{eq-first}
p_v=x_v^{\ell_v} - \prod_{c\in L_v} (x_v-c)=\Bigl(\sum L_v\Bigr)\cdot x^{\ell_v-1}+\ldots.
\end{equation}
Note that $p_v(c)=c^{\ell_v}$ for every $c\in L_v$.

Let $q$ be the polynomial obtained as follows: Start with $q=P_G$ and while there exists a vertex $v\in V(G)$
such that $\deg_{x_v} q\ge \ell_v$, replace $q$ by
$$\sum_{f\in \mathbb{Z}^{V(G)}, f(v)<\ell_v} [x^f]\,q\cdot x^f + \sum_{f\in \mathbb{Z}^{V(G)}, f(v)\ge \ell_v} [x^f]\,q \cdot x^{f-\ell_v1_v} \cdot p_v.$$
Note that each such replacement decreases $\deg_{x_v} q$ by at least one, and moreover, that
$q(\varphi)=P_G(\varphi)$ for every $\varphi\in \bigtimes_{v\in V(G)} L_v$.
We end up with a polynomial $q$ such that $\deg_{x_v} q<\ell_v$ for each $v\in V(G)$.

Note that $P_G$ is homogeneous and a product of $|E(G)|$ terms, an thus
$\sum_{v\in V(G)} f'(v)=|E(G)|$ for every $f'\in\mathbb{Z}^{V(G)}$ such that $[x^{f'}]\,P_G\neq 0$.
Since $\sum_{v\in V(G)} f(v)=|E(G)|-1$, the coefficient at $x^f$ in $q$ comes from the replacement
of $x^{\ell_v}$ by $p_v$ in the monomials $x^{f+1_v}$ of $P_G$ for vertices $v\in V(G)$ such that $f(v)=\ell_v-1$.
By (\ref{eq-first}), each such monomial contributes $[x^{f+1_v}]\,P_G\cdot \sum L_v$ to $[x^f]\,q$.
Moreover, by Theorem~\ref{thm-alontarsi}, since $G$ is not $L$-colorable we have $[x^{f+1_v}]\,P_G=0$
for every $v\in V(G)$ such that $f(v)<\ell_v-1$.  Therefore,
$$[x^f]\,q=\sum_{v\in V(G)} [x^{f+1_v}]\,P_G \cdot \sum L_v.$$
We have $q=0$, as otherwise Theorem~\ref{thm-nullstellensatz} would imply that there exists $\varphi\in \bigtimes_{v\in V(G)} L_v$
such that $q(\varphi)=P_G(\varphi)\neq 0$, and thus by Observation~\ref{obs-color}, $G$ would be $L$-colorable.
Therefore, letting $C=\bigcup_{z\in V(G)} L_z$, we have
\begin{align*}
0&=[x^f]\,q=\sum_{v\in V(G)} [x^{f+1_v}]\,P_G \cdot \sum L_v
=\sum_{v\in V(G)} [x^{f+1_v}]\,P_G \cdot \sum_{c\in C} c\cdot \chi_{L,c}(v)\\
&=\sum_{c\in C} c \cdot \sum_{v\in V(G)} [x^{f+1_v}]\,P_G\cdot \chi_{L,c}(v).
\end{align*}
Since the colors are algebraically independent, the coefficient
$$\sum_{v\in V(G)} [x^{f+1_v}]\,P_G\cdot \chi_{L,c}(v)$$
has to be zero for each color $c$.
\end{proof}

For a function $s:V(G)\to\mathbb{N}$, let us say that a function $f:V(G)\to\mathbb{N}$ is \emph{$s$-tight} if $f(v)\le s(v)$ for every $v\in V(G)$
and $f(v)=s(v)$ for exactly one vertex $v\in V(G)$.  For an $s$-tight function $f$, the \emph{$s$-base} of $f$ is the function $f':V(G)\to\mathbb{N}$
such that $f'(v)=f(v)$ for every vertex $v\in V(G)$ such that $f(v)<s(v)$ and $f'(v)=f(v)-1$ for the unique vertex $v\in V(G)$ such that $f(v)=s(v)$.
Note that for each monomial $x^f$ whose coefficient appears on the left-hand side of the equality from Theorem~\ref{thm-main},
either $f\prec s$ or $f$ is $s$-tight.  If $[x^f]\,P_G\neq 0$ for some $f\prec s$, then the graph is $s$-choosable by Theorem~\ref{thm-alontarsi}.
Hence, to apply Theorem~\ref{thm-main}, we need to focus on the coefficients of the monomials $x^f$ where $f$ is $s$-tight.
We then group them according to their $s$-bases and each group gives us a linear constraint for the characteristic vectors.

To enumerate the coefficients, we use the algorithm described in Section~\ref{sec-impl}, modified to also include monomials with the $s$-tight degree vectors
in the truncation.  Let us give a few remarks on the implementation:
\begin{itemize}
\item It is convenient to store the polynomials as lists of triples $(f',v,c)$, where $v$ is either a vertex of $G$ or $\varnothing$,
$f'\prec s$ and $c$ is the coefficient at
\begin{itemize}
\item $x^{f'}$ if $v=\varnothing$, and at
\item $x^{f'+1_v}$ if $v\neq \varnothing$; in this case $f'(v)=s(v)-1$.
\end{itemize}
The list is sorted in the increasing lexicographic ordering primarily according to $f$ and secondarily according to $v$.
This way, the monomials with the same $s$-base appear together in the list representing the result and we can form constraints from them directly,
without further post-processing.
\item In sequentialization, we partition the monomials only according to $f'$, not taking $v$ into account.  This ensures that all monomials
with the same $s$-base end up in the same branch, and thus we can still process the branches independently.
\item However, this representation introduces an issue with preserving the ordering on the output: In the standard Alon-Tarsi,
we relied on the fact that if $f_1<f_2$, then for any vertex $v\in V(G)$, $f_1+1_v<f_2+1_v$, see Section~\ref{ssec-repr} for details  However, the analogue that we would
need with the representation described above does not hold.  Indeed, consider elements $(f_1,\varnothing,c_1)$ and $(f_2,\varnothing,c_2)$
such that $f_1<f_2$, $f_1(v)=s(v)-2$ and $f_2(v)=s(v)-1$.  When processing an edge incident with $v$, these elements
will contribute $(f_1+1_v,\varnothing,c_1)$ and $(f_2,v,c_2)$, and it is not necessarily the case that $f_1+1_v\le f_2$.
To avoid this issue, when processing an edge $uv$, we produce the output by merging the following four streams,
each of which is guaranteed to be increasing:
\begin{itemize}
\item $(f+1_v,w,c)$ for $(f,w,c)$ in the input such that $f(v)\le s(v)-2$,
\item $(f,v,c)$ for $(f,\varnothing,c)$ in the input such that $f(v)=s(v)-1$,
\item $(f+1_u,w,-c)$ for $(f,w,c)$ in the input such that $f(u)\le s(u)-2$, and
\item $(f,u,-c)$ for $(f,\varnothing,c)$ in the input such that $f(u)=s(u)-1$.
\end{itemize}
\item As the constraints given by Theorem~\ref{thm-main} are linear, they can be represented compactly: We can
ignore those that are linearly dependent on the others, and thus it suffices to store a list of at most $|V(G)|$ linearly
independent ones.  Moreover, this list can of course be updated incrementally as new constraints are generated;
hence, we do not need to generate the full list of constraints first (this is particularly important for the serialization,
as being forced to store all the coefficients of the result would significantly limit its usefulness in decreasing the memory consumption).

As a minor remark, in our implementation we test the linear dependence over a finite field $\mathbb{F}_p$ for a prime $p=2^{32}-1$,
to avoid the issues associated with the growth of the coefficients during Gaussian elimination over $\mathbb{Q}$.  This may (extremely rarely)
lead to some of the constraints being dropped unnecessarily, as they are linearly dependent over $\mathbb{F}_p$ but not over $\mathbb{Q}$.
\end{itemize}

Once the constraints have been collected, we need to find the characteristic vectors that satisfy them, i.e., to
list the $\{0,1\}$-solutions to the system of linear equations.  This can be achieved using any integer linear programming solver.
As the number of variables is typically rather small, even the simple approach of performing the Gaussian elimination,
then going over all $\{0,1\}$-choices for the free variables and checking whether the values of the variables
determined by the system also belong to $\{0,1\}$ is often viable.

If the number of possible characteristic vectors is reasonably small, we can then form another integer linear program
expressing that their linear combination (with non-negative integral coefficients) should be equal to the vector
of the prescribed list sizes.  Each solution to this program then corresponds to a list assignment from which the
graph is not necessarily colorable.  Our implementation again only uses the simple approach of performing the Gaussian elimination and
checking all possible choices for the free variables, but a more sophisticated integer linear programming solver would be more appropriate
for this part.

As the final step, if the number of possible list assignments is not too large, one can try coloring the graph from
each of them (as the number of vertices is typically rather small, any CSP or SAT solver, or even rather simple
exhaustive coloring algorithms, should be good enough for the task).  In our implementation, we express the
colorability from the lists as a satisfiability problem in CNF and use MiniSat~\cite{een2005minisat}
SAT solver for this part.

\subsection{Testing results}

Extended Alon-Tarsi method of course requires more of the coefficients of the graph polynomial to be computed
compared to the standard Alon-Tarsi.  Also, we cannot terminate as soon as a monomial with a non-zero coefficient
is found, unless this monomial shows that the graph is $s$-choosable by Theorem~\ref{thm-main}, in which case
the extended Alon-Tarsi method was not actually needed\footnote{While we actually can terminate the enumeration at
any point, this could mean that we miss some of the constraints that would be found later (though this still might
be a valid alternative in case the search takes too long, and it might be reasonable to stop the search when
the space of solutions did not change for a long time).  The time measurements we report are for the
full run of the extended Alon-Tarsi that enumerates all of the tight monomials.}.  Thus, we should expect
the time complexity (and possibly the memory consumption, in cases where the sequentialization does not help) to be
worse compared to the standard Alon-Tarsi method.

To illustrate this, consider the non-choosable graphs obtained from $a$ copies of $K_b$ glued in a path-like fashion over distinct vertices, with the list
sizes equal to the vertex degrees; the evaluation of the standard Alon-Tarsi method for these graphs can be found
at the beginning of Section~\ref{sec-testing}.  We provide the dependence of the time on the number of edges
(rounded to the nearest multiple of 5) for given $b$ in Table~\ref{tab-akbe}.  The time includes collecting the constraints and eliminating the linearly
dependent ones, but no further processing (listing the feasible characteristic vectors, \ldots).
Comparing this with the results for the standard Alon-Tarsi shown in Table~\ref{tab-akb}, we conclude that for this particular type of graphs,
extended Alon-Tarsi needs about as much time as the standard Alon-Tarsi on a graph with about 15 more edges.

\begin{table}
\begin{center}
\begin{tabular}{|c|ccccc|}
\hline
\diaghead(-2,1){aaaaaaaa}{$b$}{edges}&30&35&40&45&50\\
\hline
3&0.03 s&0.2 s&0.5 s&1 s&4 s\\
4&0.1 s&0.5 s&3.5 s&&29 s\\
5&0.1 s&&3 s&&128 s\\
6&0.1 s&&&19 s&\\
\hline
\end{tabular}
\end{center}
\caption{Runtime of the extended Alon-Tarsi on graphs $aK_b$.}\label{tab-akbe}
\end{table}

For a more real-world test case, we evaluated the performance on the data set crit12 of 844030 graphs with prescribed list sizes $s$
(see also Section~\ref{ssec-order}).  The graphs in this data set were obtained as part of a project to generate
obstructions to 5-choosability of graphs drawn on the torus, and were selected through various heuristics (including checking
that they have no non-zero coefficients usable in Theorem~\ref{thm-alontarsi}) as candidates
for being critical, i.e., with the property that they are not $s$-choosable, but all their proper subgraphs are $s$-choosable.
Hence, they are generally quite close to being $s$-choosable, and thus one should expect Theorem~\ref{thm-main}
to give interesting information about their colorability.  For each of these graphs, we
\begin{itemize}
\item run standard Alon-Tarsi method (and check that it does not apply),
\item gather the constraints of the extended Alon-Tarsi method,
\item list the feasible characteristic vectors and use them to find ``deletable'' edges $uv$
with the property that in every $s$-list-assignment $L$ such that the graph is not $L$-colorable, the lists of $u$ and $v$ are disjoint,
\item if any edges are deletable, re-run the standard Alon-Tarsi method on the graph obtained
by deleting them to check its $s$-choosability,
\item list up to 100 possible $s$-list-assignments where the characteristic vectors are feasible,
\item if less than 100 such $s$-list-assignments exist, use the MiniSat solver to determine whether
the graph is colorable from any of them.
\end{itemize}

The total run time was 62 s, with 5\% taken by standard Alon-Tarsi (including a few re-runs), 21\% by extended Alon-Tarsi,
4\% by finding the feasible vectors and deletable edges, 29\% by listing the $s$-list-assignments and 35\% by coloring using
the MiniSat solver (with the remaining 6\% taken by reading the input and other miscellaneous overheads).
The results were as follows:
\begin{itemize}
\item For 548502 (65\%) of the graphs, the program found an $s$-list-assignment from which they cannot be colored.
\item For 19322 (2\%) of the graphs, the program found deletable edges and decided that they are $s$-choosable by
re-running the standard Alon-Tarsi after deleting them.
\item For 732 of the graphs, the program proved that they are $s$-choosable directly (78 have no feasible vectors,
for 215 of them the vectors do not combine to any $s$-list-assignment, and 439 were shown to be colorable from all
possible $s$-list-assignment by the MiniSat solver).
\item For 11554 (1.4\%) of the graphs, the program only succeeded in finding deletable edges (thus showing that the
graphs are not critical).
\item For 263526 (31\%) of the graphs, the program reached no conclusion, as they have too many possible $s$-list-assignments to explore.
\item For 394 of the graphs, the program reached no conclusion as all the coefficients relevant for Theorem~\ref{thm-main} are zero.
\end{itemize}

It is noteworthy that the extended Alon-Tarsi method failed to provide any information at all only for a negligible fraction
(less than 0.1\%) of the graphs from this data set, indicating the usefulness of the method in similar circumstances (enumeration of critical
graphs, confirming reducibility of configurations in graphs drawn on a fixed surface).

For this data set,
the steps of listing the potential bad assignments and checking the colorability from them take disproportionate
amount of time relative to their usefulness, and so perhaps running them is not worthwhile, unless a confirmation
that the graph is not $s$-choosable is needed.  On the other hand, it should be noted that we primarily focused
on optimizing the implementations of the standard and extended Alon-Tarsi method, and thus the running time of the other
parts definitely can be improved.

\section{Limitations and extensions}\label{sec-more}

Let us note one important restriction for the extended Alon-Tarsi method:  Let $s$ be the function assigning list sizes to vertices of a graph $G$.
Suppose that a monomial $x^f$, where $f$ is $s$-tight, has a non-zero coefficient. Let $v$ be the vertex such that $f(v)=s(v)$.
Letting $s'=s+1_v$, Theorem~\ref{thm-alontarsi} implies that $G$ is $s'$-choosable.
Hence, Theorem~\ref{thm-main} can only give interesting information for graphs $G$
that are``close to $s$-choosable'' in the sense that $G$ is $(s+1_v)$-choosable for at least one vertex $v$.

Moreover, while the examples and the experiments discussed in the previous section show that Theorem~\ref{thm-main}
applies for many interesting combinations of graphs and list sizes, there are
rather simple (and close to $s$-choosable) graphs for which Theorem~\ref{thm-main} does not give any information, since all
relevant coefficients turn out to be zero; see e.g. Example~\ref{ex-wh} below.

Note also that the constraints on the characteristic vectors of the colors obtained using Theorem~\ref{thm-main}
are necessarily linear, and consequently if feasible characteristic vectors satisfy more complicated conditions,
the method can at most return a linear relaxation of these conditions.

It is natural to mitigate these concerns by using further coefficients of the graph polynomial.
Indeed, the argument used to prove Theorem~\ref{thm-main} clearly can be pushed further; for example, as the next
step, one obtains the following result.
\begin{theorem}\label{thm-next}
Let $G$ be a graph, let $L=\{L_v\subset \mathbb{R}:v\in V(G)\}$ be an assignment of lists to vertices of $G$,
and suppose that $f:V(G)\to \mathbb{Z}$ satisfies $f(v)<|L_v|$ for every $v\in V(G)$ and $\sum_{v\in V(G)} f(v)=|E(G)|-2$.
If $G$ is not $L$-colorable, then
\begin{equation}\label{eq-singleq}
\sum_{\{u,v\}\subseteq V(G)} [x^{f+1_u+1_v}]\,P_G \cdot \chi_{L,c}(u)\chi_{L,c}(v) + \sum_{v\in V(G)} [x^{f+2\cdot 1_v}]\,P_G \cdot \chi_{L,c}(v)=0
\end{equation}
for each color $c$ and 
\begin{align}\label{eq-doubleq}
&\sum_{\{u,v\}\subseteq V(G)} [x^{f+1_u+1_v}]\,P_G \cdot (\chi_{L,c_1}(u)\chi_{L,c_2}(v)+\chi_{L,c_2}(u)\chi_{L,c_1}(v))\nonumber\\
&+\sum_{v\in V(G)} [x^{f+2\cdot 1_v}]\,P_G \cdot \chi_{L,c_1}(v)\chi_{L,c_2}(v)=0
\end{align}
for any distinct colors $c_1$ and $c_2$.
\end{theorem}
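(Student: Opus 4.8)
The plan is to imitate, essentially verbatim, the proof of Theorem~\ref{thm-main}, carrying the bookkeeping one total degree further. Write $\ell_v=|L_v|$ and $p_v=x_v^{\ell_v}-\prod_{c\in L_v}(x_v-c)=(\sum L_v)x_v^{\ell_v-1}+\ldots$, and form $q$ from $P_G$ by repeatedly replacing a monomial divisible by $x_v^{\ell_v}$ using $x_v^{\ell_v}\equiv p_v\pmod{\prod_{c\in L_v}(x_v-c)}$, exactly as there: each step strictly decreases the degree of some variable, leaves the value of the polynomial on $\{\varphi:\varphi(v)\in L_v\}$ unchanged, and we stop with $\deg_{x_v}q<\ell_v$ for every $v$. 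Since $G$ is not $L$-colorable, Observation~\ref{obs-color} and Theorem~\ref{thm-nullstellensatz} force $q=0$, so \emph{every} coefficient of $q$ vanishes (the vanishing of those of total degree $|E(G)|-1$ is Theorem~\ref{thm-main}, whose conclusion I reuse below).

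The new ingredient is the coefficient $[x^f]\,q$ for the $f$ of the statement, which has $\sum_v f(v)=|E(G)|-2$ and $f\prec\ell$. Because the reductions only lower exponents and never move mass between variables, a monomial $x^g$ of $P_G$ (necessarily of total degree $|E(G)|$) can contribute to the coefficient of $x^f$ only when $g\ge f$ coordinatewise with $g(v)\ge\ell_v$ at every coordinate $v$ where $g(v)>f(v)$; since $\sum_v(g(v)-f(v))=2$, the monomials with this property are exactly $g=f+1_u+1_v$ with $f(u)=\ell_u-1$ and $f(v)=\ell_v-1$, contributing $[x^g]\,P_G\cdot(\sum L_u)(\sum L_v)$, and $g=f+2\cdot1_v$ with $f(v)\in\{\ell_v-2,\ell_v-1\}$, contributing $[x^g]\,P_G$ times $-e_2(L_v)$, respectively $(\sum L_v)^2-e_2(L_v)$ (the coefficient of the relevant power of $x_v$ picked up when $x_v^{\ell_v}$ is reduced once, respectively twice), where $e_2(S)=\sum_{\{a,b\}\subseteq S}ab$. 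Setting this expression for $[x^f]\,q$ equal to $0$, substituting $\sum L_v=\sum_c c\,\chi_{L,c}(v)$ and $e_2(L_v)=\sum_{\{c_1,c_2\}}c_1c_2\,\chi_{L,c_1}(v)\chi_{L,c_2}(v)$ (the sum over unordered pairs of distinct colors), and invoking algebraic independence of the colors, the coefficient of $c^2$ yields \eqref{eq-singleq} with both sums restricted to pairs $\{u,v\}$, respectively vertices $v$, that are \emph{tight} (where $f$ equals $\ell$), and the coefficient of $c_1c_2$ for distinct $c_1,c_2$ yields \eqref{eq-doubleq} similarly restricted.

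Finally I would promote these tight-only sums to the full sums of the statement. The dropped coefficients $[x^{f+1_u+1_v}]\,P_G$ with $u,v$ both non-tight, and $[x^{f+2\cdot1_v}]\,P_G$ with $f(v)\le\ell_v-3$, vanish by Theorem~\ref{thm-alontarsi} since the corresponding degree vectors are $\prec\ell$; what remains is the contribution of off-diagonal pairs $\{u,v\}$ with exactly one of $u,v$ tight and of the diagonal terms $f+2\cdot1_v$ with $f(v)=\ell_v-2$. For each non-tight vertex $w$ the vector $f+1_w$ is $\prec\ell$ and has total degree $|E(G)|-1$, so Theorem~\ref{thm-main} applied to it gives $[x^{f+2\cdot1_w}]\,P_G\,\chi_{L,c}(w)+\sum_{z\ne w}[x^{f+1_w+1_z}]\,P_G\,\chi_{L,c}(z)=0$; multiplying this by $\chi_{L,c}(w)$, summing over all non-tight $w$, and again discarding the terms with $z$ non-tight via Theorem~\ref{thm-alontarsi}, one obtains exactly the missing part of \eqref{eq-singleq}, whose sum with the restricted identity is the full \eqref{eq-singleq}; the argument for \eqref{eq-doubleq} is the same after symmetrizing in $c_1$ and $c_2$.

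The main obstacle is the middle step: keeping track of the monomials of $P_G$ whose $x_v$-degree strictly exceeds $\ell_v$, which require a double reduction and therefore feed in the coefficient $(\sum L_v)^2-e_2(L_v)$ rather than $(\sum L_v)^2$, and then checking that after the substitution $\sum L_v=\sum_c c\,\chi_{L,c}(v)$ the $c^2$-part and the $c_1c_2$-part fall apart exactly into the linear diagonal term of \eqref{eq-singleq} and the bilinear cross terms of \eqref{eq-singleq} and \eqref{eq-doubleq}. Once that is in place, the passage from tight-only to full sums is a routine double-counting argument built on Theorem~\ref{thm-main}.
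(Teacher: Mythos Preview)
Your proposal is correct and follows essentially the same route as the paper: compute $[x^f]\,q$ by tracking which monomials of $P_G$ feed into it under the reductions, then combine algebraic independence of the colors with Theorems~\ref{thm-alontarsi} and~\ref{thm-main} to pass from the tight-only sums to the full sums. The only difference is the order of operations---the paper first rewrites $[x^f]\,q$ as an expression with unrestricted sums over $V(G)$ and \emph{then} reads off the $c^2$ and $c_1c_2$ coefficients, whereas you extract coefficients first and extend afterward.

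One point of care in your order: the phrase ``\eqref{eq-doubleq} similarly restricted'' is slightly off. For $v$ with $f(v)=\ell_v-2$ the $-e_2(L_v)$ contribution gives the diagonal term with a \emph{minus} sign, so the identity you actually obtain from the $c_1c_2$-coefficient is $\sum_{\{u,v\}\subseteq S}(\cdots)+\sum_{v\in S}(\cdots)-\sum_{v\in D}(\cdots)=0$, not just the $S$-restricted \eqref{eq-doubleq}. Correspondingly, the symmetrized application of Theorem~\ref{thm-main} (multiplying by $\chi_{L,c_2}(w)$ and by $\chi_{L,c_1}(w)$ and adding) produces $2\sum_{w\notin S}[x^{f+2\cdot1_w}]\,P_G\,\chi_{L,c_1}(w)\chi_{L,c_2}(w)$ on the diagonal, and this factor of $2$ is exactly what is needed to cancel the extra $-\sum_{v\in D}$ and yield the full \eqref{eq-doubleq}. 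Your earlier paragraph already has the $-e_2$ sign right, so this is only a slip in the summary, not a gap in the argument. (Also, ``tight'' should mean $f(v)=\ell_v-1$, not $f=\ell$.)
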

\begin{proof}
Let $\ell_v$ and $p_v$ for $v\in V(G)$ and $q$ be as in the proof of Theorem~\ref{thm-main}.
Let $S=\{v\in V(G):f(v)=\ell_v-1\}$ and $D=\{v\in V(G):f(v)=\ell_v-2\}$.  Observe that
\begin{align*}
[x^f]\,q&=\sum_{\{u,v\}\subseteq S} [x^{f+1_u+1_v}]\,P_G \cdot \sum L_u\sum L_v
+\sum_{v\in S} [x^{f+2\cdot 1_v}]\,P_G \cdot \Bigl(\sum L_v\Bigr)^2\\
&\phantom{=}-\sum_{v\in S\cup D} [x^{f+2\cdot 1_v}]\,P_G \cdot \sum_{\{c_1,c_2\}\subseteq L_v} c_1c_2.
\end{align*}
Since $G$ is not $L$-colorable, we have $[x^{f+1_u+1_v}]\,P_G=0$ for distinct $u, v\not\in S$ by Theorem~\ref{thm-alontarsi}
and $\sum_{v\in V(G)} [x^{f+1_u+1_v}]\,P_G \cdot \sum L_v=0$ for $u\not\in S$ by Theorem~\ref{thm-main}.
Therefore,
\begin{align*}
\sum_{\{u,v\}\subseteq V(G)}& [x^{f+1_u+1_v}]\,P_G \cdot \sum L_u\sum L_v\\
&=\sum_{\{u,v\}\subseteq S} [x^{f+1_u+1_v}]\,P_G \cdot \sum L_u\sum L_v
+\sum_{u\in V(G)\setminus S}\sum L_u\cdot \sum_{v\in V(G)}[x^{f+1_u+1_v}]\,P_G \cdot \sum L_v\\
&\phantom{=}-\sum_{\{u,v\}\subseteq V(G)\setminus S}[x^{f+1_u+1_v}]\,P_G \cdot \sum L_u\sum L_v
-\sum_{v\in V(G)\setminus S}[x^{f+2\cdot 1_v}]\,P_G \cdot \Bigl(\sum L_v\Bigr)^2\\
&=\sum_{\{u,v\}\subseteq S} [x^{f+1_u+1_v}]\,P_G \cdot \sum L_u\sum L_v
-\sum_{v\in V(G)\setminus S}[x^{f+2\cdot 1_v}]\,P_G \cdot \Bigl(\sum L_v\Bigr)^2.
\end{align*}
Moreover, $[x^{f+2\cdot 1_v}]\,P_G=0$ for $v\not\in S\cup D$.  Hence, we obtain
\begin{align*}
0=[x^f]\,q&=\sum_{\{u,v\}\subseteq V(G)} [x^{f+1_u+1_v}]\,P_G \cdot \sum L_u\sum L_v\\
&\phantom{=}+\sum_{v\in V(G)} [x^{f+2\cdot 1_v}]\,P_G \cdot \Bigl(\Bigl(\sum L_v\Bigr)^2-\sum_{\{c_1,c_2\}\subseteq L_v} c_1c_2\Bigr).
\end{align*}
Since the colors are algebraically independent, for each color $c$ the coefficient
$$\sum_{\{u,v\}\subseteq V(G)} [x^{f+1_u+1_v}]\,P_G \cdot \chi_{L,c}(u)\chi_{L,c}(v) + \sum_{v\in V(G)} [x^{f+2\cdot 1_v}]\,P_G \cdot \chi_{L,c}(v)$$
at $c^2$ must be zero, and for any distinct colors $c_1$ and $c_2$, the coefficient
\begin{align*}
&\sum_{\{u,v\}\subseteq V(G)} [x^{f+1_u+1_v}]\,P_G \cdot (\chi_{L,c_1}(u)\chi_{L,c_2}(v)+\chi_{L,c_2}(u)\chi_{L,c_1}(v))\\
&+\sum_{v\in V(G)} [x^{f+2\cdot 1_v}]\,P_G \cdot \chi_{L,c_1}(v)\chi_{L,c_2}(v)
\end{align*}
at $c_1c_2$ must be zero.
\end{proof}

Let us give a simple example of a graph where Theorem~\ref{thm-next} gives more information than Theorem~\ref{thm-main}.

\begin{example}\label{ex-wh}
Let $G$ consist of the wheel with center $0$ and spokes $1$, \ldots, $4$, with a list assignment $L$ such that
$|L_0|=3$ and $|L_1|=\ldots=|L_4|=2$.  A straightforward case analysis shows that up to permutation of colors,
there are only a few list assignments of this size such that $G$ is not $L$-colorable: $L_0=\{a,b,c\}$ and,
letting $L_i=L_{i-4}$ for $i\ge 5$,
\begin{itemize}
\item for some $i\in\{1,2\}$, $L_i=L_{i+1}=\{a,b\}$ and $L_{i+2}=L_{i+3}=\{a,c\}$, or
\item for some $i\in\{1,2,3,4\}$, $L_i=L_{i+1}=\{a,b\}$ and $L_{i+2}=L_{i+3}=\{c,d\}$, or
\item for some $i\in\{1,2,3,4\}$, $L_i=\{a,d\}$, $L_{i+1}=\{a,b\}$, $L_{i+2}=\{b,c\}$ and $L_{i+3}=\{c,d\}$.
\end{itemize}
However, the graph polynomial of $G$ does not have any monomials with non-zero coefficient relevant for Theorem~\ref{thm-main}
(indeed, to apply Theorem~\ref{thm-main} we would have to have $f(0)<3$, $f(1),\ldots, f(4)<2$ and
$f(0)+\ldots+f(4)=|E(G)|-1=7$, which is not possible).  Let us remark that
\begin{align*}
P_G=x_0^2x_1x_2x_3x_4(&x_1^2+x_2^2+x_3^2+x_4^2\\
&-2x_1x_2-2x_2x_3-2x_3x_4-2x_4x_1\\
&+x_1x_3+x_2x_4\\
&+x_0x_1+x_0x_2+x_0x_3+x_0x_4\\
&-2x_0^2) + \ldots,
\end{align*}
and Theorem~\ref{thm-next} thus implies that
the characteristic vector $\chi$ of any color must satisfy
\begin{align*}
\chi(1)+\chi(2)+\chi(3)+\chi(4)&\\
-2\chi(1)\chi(2) -2\chi(2)\chi(3) -2\chi(3)\chi(4) -2\chi(4)\chi(1)&\\
+\chi(1)\chi(3)+\chi(2)\chi(4)&\\
+\chi(0)\chi(1) +\chi(0)\chi(2) +\chi(0)\chi(3) +\chi(0)\chi(4)&\\
-2\chi(0)&=0,
\end{align*}
and that the characteristic vectors $\chi$ and $\chi'$ of any two distinct colors must satisfy
\begin{align*}
\chi(1)\chi'(1)+\chi(2)\chi'(2)+\chi(3)\chi'(3)+\chi(4)\chi'(4)&\\
-2\sum_{(i,j)\in \{(1,2),(2,3),(3,4),(4,1)\}} \chi(i)\chi'(j)+\chi'(i)\chi(j)&\\
+\chi(1)\chi'(3)+\chi'(1)\chi(3) +\chi(2)\chi'(4)+\chi'(2)\chi(4)&\\
+\sum_{j\in\{1,2,3,4\}} \chi(0)\chi'(j)+\chi'(0)\chi(j)&\\
-2\chi(0)\chi'(0)&=0.
\end{align*}
This suffices to restrict the possible list assignments of lists of size $3,2,2,2,2$ in order
to one of 22 options, ten of them being the assignments $L$ from which $G$ is not $L$-colorable described above.
\end{example}

Theorem~\ref{thm-next} puts more complicated quadratic constraints on the feasible characteristic
vectors of colors, and moreover, it put constraints on the feasible characteristic vectors of
pairs of colors that can appear in the list assignment, thus possibly revealing much more information
than the single-color linear constraints from Theorem~\ref{thm-main}.  On the flip side,
the computational cost of applying Theorem~\ref{thm-next} would be substantially higher---more
coefficients of the graph polynomial have to be computed, and it is not clear how to represent
the obtained constraints compactly or how to obtain information from them about the feasible characteristic vectors.

Perhaps the most practical option would be to first enumerate the vectors that are feasible
according to Theorem~\ref{thm-main} as discussed in Section~\ref{sec-gener} and build a complete
graph $K$ (with loops) whose vertices are the feasible characteristic vectors; one would expect $K$ to be
reasonably small for graphs with about 10--20 vertices.  As constraints from Theorem~\ref{thm-next} are generated,
we delete the vertices of $K$ violating (\ref{eq-singleq}) and edges of $K$ violating (\ref{eq-doubleq}).
In the end, for any list assignment $L$ with given list sizes such that $G$ is not $L$-colorable,
the characteristic vectors of the colors in $L$ must form a clique in $K$, and if two distinct colors
have the same characteristic vector, it must form a loop in $K$.

And of course, we can push the argument further, obtaining cubic, quartic, \ldots constraints.
However, it is not clear whether implementing this strengthening would be worth the effort, especially
given that the method from Section~\ref{sec-gener} seems to perform quite well in practice.

\bibliography{../data.bib}

\end{document}